\documentclass[11pt,oneside]{article}
\usepackage{authblk}
\usepackage{amsthm, amsmath, amssymb, bbm}
\usepackage{graphicx, wrapfig, caption, subcaption}
\graphicspath{ {./img/} }
\allowdisplaybreaks

\usepackage[dvipsnames]{xcolor}

\def\E{\mathbb{E}}

\newtheorem{theorem}{Theorem}[section]
\newtheorem{definition}{Definition}[section]
\newtheorem{proposition}{Proposition}[section]
\newtheorem{remark}{Remark}[section]
\newtheorem{lemma}{Lemma}[section]

\title{Stability in stochastic hypergraph matching I: necessary and sufficient criteria}
\author{Doan Dai Nguyen, Ana Bu\v{s}i\'{c}}
\affil{Inria and DI ENS, École Normale Supérieure\\PSL University, Paris, France}

\begin{document}
	\maketitle
	\begin{abstract}
		Stochastic matching on hypergraphs is an important topic for its versatility in capturing real-life systems, from living donor transplant to ride-hailing. Nevertheless, finding necessary and sufficient criteria for stability is a long-standing problem. One of the key difficulties is the fact that greedy policies, whilst maximally stable for stochastic matching on graphs, no longer achieve maximal stability region on hypergraphs. So far, no alternative families of policies with similar properties have been known.
		
		In this work, we introduce online assignment policies, in which each item is assigned to a matching hyperedge type upon arrival. We show that this is a good generalisation to greedy policies, by proving that they are maximally stable. Their natural amenability to analysis allow us to derive several necessary and sufficient criteria for stability, which generalise the known criteria for graphs. Furthermore, the constructive proof gives a maximally stable arrival-rate agnostic policy.
	\end{abstract}
	
	\section{Introduction}

\subsection{Motivation}

This article is the first in a series concerning the stability of dynamic matching problem on a hypergraph $G$, where each vertex $v$ denotes a class of items, arriving randomly and waiting in the buffer. A collection of items can form a type-$e$ matching if their corresponding classes form a hyperedge $e$. If multiple matchings are possible, one may need to specify a policy in order to decide which matchings to make; the goal of the matching policy is to not let any items wait in the buffer indefinitely.

This model is motivated by the living-donor transplant problem. Ideally, a patient in need of an organ receives it from someone with close connection, such as blood relatives, a spouse, a friend, or a co-worker. But the donor and the intended recipient might have different characteristics leading to biological incompatibility, such as blood type or antibodies.

In this case, a solution is multi-party organ exchange \cite{Roth2007}, where each exchange takes place amongst a group of $n$ donor-recipient pairs, and for $i = 1, 2, \ldots, n$, the $i$\textsuperscript{th} donor gives their organs to the $(i+1)$\textsuperscript{th} recipient (as usual, we consider by convention the $(n+1)$\textsuperscript{th} pair to be the first pair). In this setting, each vertex in the compatibility graph $G$ is a class indicating some donor-recipient pairs according to their characteristics, and each exchange is a subset of vertices, meaning a hyperedge, resulting in a hypergraph. Naturally, donor-recipient pairs arrive randomly and we desire a matching policy guaranteeing that no patient is expected to wait indefinitely.

When restricting to two-party exchange, $G$ is a graph and we have necessary and sufficient condition for the existence of such policies, as well as their construction \cite{Busic2013, Mairesse2016}. However, Roth, Sönmez, and Ünver demonstrated that allowing three-party exchange decreases waiting time, and such improvement is intrinsic, in the sense that it cannot be achieved by any two-party matching policies \cite{Roth2007}. 

Moreover, in many applications, multi-party exchange is an intrinsic property of the system and cannot be removed. In an assemble-to-order system, a product may require three or more parts \cite{Gurvich2014}. In case of online games, three or more players may be required to form a team \cite{Veron2014}. In case of ride-sharing, multiple drives may be needed for one long journey \cite{Teubner2015}.

As much as it is needed and desired, allowing three or more parties to participate in an exchange poses a difficult problem regarding the design of policy and/or the compatibility hypergraph. In particular, there has been no easy way to check if a given matching model is stabilisable, and if so, how to construct a stabilising policy.

\subsection{Difficulties}

The path towards such a criterion is well-known. In principle, one starts with a class of polices which are, at the same time, simple enough to be analysed and powerful enough to stabilise the model whenever it is stabilisable. Then, simplifying the condition for the existence of such policy within said class eventually leads to a stability criterion which depends only on the matching model itself.

In case of graphs, one may consider greedy matching policies which perform matchings whenever they are available (note that a priori, we are not obliged to do so). Bu\v{s}i\'{c}, Gupta, and Mairesse first demonstrated that this class, and in particular Match-the-Longest policy, has desired properties in case of stochastic matching on bipartite graphs \cite{Busic2013}, which was later generalised to general graphs \cite{Mairesse2016}.

This leads to the first stability criterion for bipartite graph case \cite[Lemma 3.2, Theorem 7.1]{Busic2013} and later for general graph case \cite[Theorem 1]{Mairesse2016}, which both depend on the topology of compatibility graph. Based on this, Comte et al. later proved other equivalent criteria based on the incidence matrix \cite[Proposition 7 and 8]{Comte2021}.

However, in case of hypergraphs, finding a class of polices with the two desired properties above turns out to be difficult: in general, greedy policies are no longer sufficient \cite[Figure 1(right)]{Kerimov2021}. So far, no such class of policies has been identified.

\subsection{Contributions and organisation}

Our contributions in this article are as follow:

\begin{itemize}
	\item In Section \ref{Criterion for online assignment policies}, we introduce a class of randomised policies, called online assignment policies, which fix the matching type for each item upon arrival.
	
	Note that this is different from greedy policies: it is possible that there are some possible matchings if one considers the buffer's content alone and ignores the assigned matching types, but those matchings are not realised due to the assigned types.
	
	\item We then develop a framework to study this policy class which leads to Condition \eqref{eq:onlcond} necessary and sufficient criterion for the existence of a stabilising policy \emph{within this class} (Theorem \ref{theorem: necessity and sufficiency of onlcond}).
	
	\item In Section \ref{Criterion for general policies}, we extend this framework to incorporate all policies and derive Condition \eqref{eq:gencond} necessary and sufficient criterion for stability in general (Theorem \ref{theorem: necessity and sufficiency of gencond}).
	
	\item Last but not least, in Section \ref{Equivalence}, we introduce Condition \eqref{eq:inccond} analogous to that of Comte et al. \cite[Proposition 7 and 8]{Comte2021}, and we show that in fact, all three criteria are equivalent (Theorem \ref{theorem: necessity and sufficiency of inccond}). This implies that our policy class has maximum stability region, as desired.
	
	\item In Section \ref{Graphs vs hypergraphs}, to further demonstrate that our policy class is an appropriate generalisation of greedy polices, we show directly the equivalence between \texttt{NCOND} and Condition \eqref{eq:onlcond} in case of graphs.
	
	\item Finally, in developing the framework, we introduce a function $L_+$ and show that $L_+$-MaxWeight, which is deterministic, also has maximum stability region. 
\end{itemize}

Aside from the aforementioned, Section \ref{Preliminaries} introduces notation and setting of stochastic matching model on hypergraphs, and Section \ref{Conclusion} concludes with some final remarks.

\subsection{Literature review}

\subsubsection{Stability in case of hypergraphs}

In stark contrast with the graph case, little is known about hypergraph cases. Harrison considered a special case, where there is only one hyperedge containing all the vertices, and showed that this model is not stabilisable \cite{Harrison1973}.

The problem of stochastic matching on hypergraphs was first formally introduced by Rahme and Moyal \cite{Rahme2021}, where they considered greedy policies and introduced many different criteria which are natural generalisations of that of Mairesse and Moyal for graphs \cite[Theorem 1]{Mairesse2016}. However, no criteria were shown to be necessary and sufficient, and the question of finding a necessary and sufficient condition for stability was left open.

Diffusion limit approximation has been established by Xie in his PhD thesis \cite[Section 3 and 5]{Xie2022} in some heavy traffic regime, both when items are non-reneging \cite[Theorem 4]{Xie2022} and reneging \cite[Theorem 11]{Xie2022}. Some properties of the limiting process were studied, such as moment bound on the waiting time (under some appropriate scaling) \cite[Thereom 8]{Xie2022}. However, unlike the classical queueing system where there is a link between the stability of original queue model and that of fluid limit, it is not clear that the same holds for matching model.

\subsubsection{Stochastic matching beyond stability}

There have been attempts to study stochastic matching in hypergraphs whilst bypassing the question of stability altogether. However, they all assume explicitly stability or some stronger assumptions. It thus appears that the question of stability is a true bottleneck: without resolving it, further studies are limited.

Notably, Nazari and Stolyar studied the problem of maximising reward in discounted infinite-horizon setting with discounted rate being small enough \cite{Nazari2019}. In particular, they exhibited a policy and showed that it is asymptotically optimal. However, the stability of their policy depends on the assumption that the system is stabilisable \cite[Assumption 5]{Nazari2019}.

Also concerning reward maximisation but under general position condition, Kerimov et al. \cite{Kerimov2021} and Gupta \cite{Gupta2024} independently presented policies achieving asymptotic hindsight optimality with constant regret, which is stronger than asymptotic average-cost optimality. However, general position condition is restrictive and not always satisfied \cite[Section 6.4.1]{Comte2021}.

In another direction, Gurvich and Ward considered the problem of holding-cost minimisation in finite horizon setting, but the optimality supposes a condition on the compatibility graph \cite[Assumption 1]{Gurvich2014}, which, as we will see, is sufficient but not necessary for stability.

\subsubsection{Online assignment policies}

It is worth noting that the notion of online assignment policies that we consider in this article has already been considered in different contexts.

The idea of prescribing matching types was already present in the Virtual Matching System of Nazari and Stolyar \cite[Section 3.3]{Nazari2019}. This is different from our notion of online assignment in that we assign the matching type to each individual item at the moment of arrival, whereas Nazari and Stolyar only draws matching types to be fulfilled by some items in the future, but the matching type for each item is not fixed at the arrival.

Nevertheless, their assumption of stability (Assumption 5, which is a special case of Assumption 8) is somewhat similar to - if not already hints at - our first stability criterion. However, the question of stability was not addressed in the work of Nazari and Stolyar.

Independent from and most similar to our work is that by Gupta \cite{Gupta2024}, where he also considered assigning matching type to each item upon arrival. Moreover, his construction of Lyapunov function is also similar to ours, and his policy is also induced by minimising this function at each step. However, there are three decisive differences.
\begin{itemize}
	\item In work of Gupta, analysis was carried under General Position Condition, which, as we remarked above, is not always satisfied; our work does not assume this condition. Moreover, the goal of the two works are also different: the former aimed toward asymptotically optimal policy in maximising reward. The question of stability was not treated, and no stability criteria were introduced.
	
	\item For each hyperedge $e$, Gupta considered an arbitrary but fixed permutation $(v_1, \ldots, v_{|e|})$ of $e$. Then, denote by $q_{v, e}$ the number of class-$v$ items assigned to hyperedge $e$, Gupta considered the cyclic sum-of-square term
	\[
	L_{1, e} (q) = \sum_{i = 1}^{|e|} (q_{v_{i+1}, e} - q_{v_i, e})^2,
	\]
	where $v_{|e| + 1} = v_1$, then summing $L_{1, e}(q)$ for all $e$. In our case, we consider sum-of-square of all pairwise differences
	\[
	L_{2, e} (q) = \sum_{\{u, v\} \in e^2} (q_{u, e} - q_{v, e})^2
	\]
	where the sum is over the \textit{unordered} pairs of vertices $(u, v)$ in a hyperedge $e$, then summing $L_{2, e}(q)$ for all $e$. Although seemingly small, this allows capturing more precisely the imbalance within a hyperedge, which is a decisive factor leading to the establishment of stability criteria.
	
	\item Another novelty in our work is the definition of assignment rate (Definition \ref{definition: assignment rate}) and reassignment rate (Definition \ref{definition: reassignment rate}). These two notions are what allows us to carry out the analysis for online assignment policies and general policies rigorously.
\end{itemize}
	\section{Preliminaries}
\label{Preliminaries}

\subsection{Matching models}

To define a matching model $(G, \mu)$, we have a hypergraph $G = (V, E)$, so-called a compatibility hypergraph or a matching hypergraph. The vertices $v$ denote classes, and the hyperedges $e$ denote matching types. In this setting, a hyperedge $e \subseteq V$ is a non-empty subset of vertices, or in other words, $E \subset \mathcal{P}(V)$ is a subset of the power set of $V$. In particular, if we have $|e| = 2$ for all $e \in E$, then $G$ is a graph, and moreover, in this setting, since we only concern with the subsets but not the order of elements within a subset, if $G$ is a graph, then $G$ is undirected.

By abuse of notation, we see $G$ also as a subset of $V \times E$, and we say that $(v, e) \in G$ if $e \in E$ and $v \in e$.

We equip it with a buffer $\hat{Z} = (\hat{Z}_v)_{v \in V}$, and a measure $\mu$ of support $V$, meaning $\mu_v > 0$ for all $v \in V$.

A realisation of the model is as follow: we start with some initial buffer state. Then, class-$v$ items arrive according to an independent Poisson process of rate $\mu_v$ and wait in the buffer to be matched. Since scaling $\mu_v$ for all $v$ by an absolute constant is equivalent to scaling time, by convention, we normalise $\mu_v$ by the sum $\sum_{u \in V} \mu_u$, meaning that $\mu$ is now a distribution on $V$, called the arrival rates.

As the total arrival time is state-independent, using classical uniformisation argument, one may assume that at each time $t \in \mathbb{N}$, exactly one item arrive. Intuitively, $\mu_v$ denotes the probability that an item of a given class arrives at a given instance. In other words, at time $t \in \mathbb{N}$, a vertex $v \in V$ is drawn i.i.d. according to $\mu$, and an item of class $v$ arrives and waits in the buffer.

Then, if for some hyperedge $e$ and for each vertex $v \in e$, there is at least one item of class $v$ in the buffer, we say that a matching of type $e$ can be made. However, this is \emph{not} an obligation: we may nevertheless choose not to realise this match, but if we do, for each $v \in e$, exactly one item of class $v$ will depart from the buffer. If we choose to always match whenever there are available matchings, we say that we are matching greedily.

However, it can happen that there are multiple available matchings, and even when we choose to realise a given matching $e$, for a given $v \in e$, there can be multiple items of class $v$. A matching policy therefore must be specified to determine which matchings to make and which items to be matched.

\subsection{State space description}
\label{Matching policies}

To formalise the description in the previous subsection, first let $(A_t)_{t \in \mathbb{N}}$ denote the arrival process: in particular, $A_t$ is a random variable taking value in $V$ and being drawn i.i.d. according to $\mu$. 

Next, each $\hat{Z}_v(t)$ is seen as a (finite) \textit{multiset} of integers denoting the time of arrival of each class-$v$ item; the need for multiset stems from the definition of initial buffer state, where we might already have multiple items of the same class in the buffer. The buffer state $\hat{Z}(t)$ at time $t$ is then seen as an element of $\left(\mathbb{N}^\mathbb{N}\right)^V = \mathbb{N}^{\mathbb{N} \times V}$. 

\begin{definition}
	A type-$e$ matching $M = (M_v)_{v \in V}$ is defined as an element in $\left(\mathbb{N}^\mathbb{N}\right)^V$ such that $|M_v| = 1$ if and only if $v \in e$, and $|M_v| = 0$ otherwise.
	
	Moreover, let $B \in (\mathbb{N}^\mathbb{N})^V$ be a buffer state, then a matching $M$ is admissible if for all $v \in V$, we have $M_v \subset B_v$. If we choose to realise the matching $M$, then the resulting buffer state is $B \setminus M = (B_v \setminus M_v)_{v \in V}$.
\end{definition}
In rough terms, this says that a matching $M$ is a choice of items, whereby we choose only and exactly one class-$v$ item for all $v \in e$, whose arrival time is denoted by $M_v$. Moreover, a matching is admissible if it can be realised.

A matching policy $\Phi$ is then a collection of matching functions $(\Phi_t)_{t \in \mathbb{N}}$, where $\Phi_t$ determines the matchings at time $t$. Then, let $B(t) = (B_v(t))_{v \in V}$ be the buffer state before the matching policy making decision, 
\[
	B_v(t) = 
	\begin{cases}
		\hat{Z}_v(t-1) \cup \{t\} & \text{ if } v = A_t \\
		\hat{Z}_v(t-1) & \text{ otherwise}
	\end{cases},
\]
then $\Phi_t$ is now formally defined as a function taking $\mathcal{H}_t$ as the argument and returning a state space $\hat{Z}_t$, which must be admissible, meaning it must be obtainable from $B(t)$ by realising some (possibly zero) admissible matchings.

Naturally, we require that $\Phi$ only has access to the current history $\mathcal{H}$.

\begin{definition}
	At time $t$, the collection $\mathcal{H}_t$ of random variables $(A_i)_{i \leq t}$ and of functions $(\Phi_i)_{i \leq t-1}$ is called the history up to time $t$.
\end{definition}

In rough terms, it represents everything we know up to time $t$: the arrivals and the decisions made by the policy, which includes the matching made but possibly more; in case of online assignment polices (to be defined in Subsection \ref{Introduction of online assignment policies}), this also includes the matching type assignments. We then require that $\Phi_t$ depends only on $\mathcal{H}_t$.

Note that we choose to include $(\Phi_i)_{i \leq t-1}$ and not $(\hat{Z}(i))_{i \leq t-1}$ because on the one hand, this allows $(\Phi_i)_{i \leq t-1}$ to be random variables modelling a randomised policy, and on the other hand, if $(\Phi_i)_{i \leq t-1}$ is deterministic, $(\hat{Z}(i))_{i \leq t-1}$ will be uniquely determined, so we are at no loss of generality.

Finally, $\Phi$ is said to stabilise $(G, \mu)$ if along a given trajectory, at any point $T \geq 0$, we have that the expected time for the system to clear, that is
\[
	\tau_T = \min \left\{ t \geq 0 \mid \forall v \in V, \hat{Z}_v(t + T) = \emptyset \right\}
\]
is finite, meaning for all $T \geq 0$, we have $\E[\tau_T \mid \mathcal{H}_T] < \infty$; of course, the upper bound might not be uniform. We say that $(G, \mu)$ is stabilisable if there exists such policy $\Phi$.

Indeed, this is a desirable property: this implies that an item once arrived, regardless of the current buffer state, will be served in finite time on average. In case of organ exchange, it means that all patients will receive suitable organs in finite time. In case of assembly-to-order system, this means no parts will be stored indefinitely.

\subsection{Markovian policies, stationary policies, size-based policies}

The previous subsection describes rigorously the notion of matching policies in generality. In practice, we may allow some relaxations.

\begin{definition}
	\hfill
	\begin{itemize}
		\item Given the buffer state $\hat{Z}(t)$ at time $t$, let $Z(t) = (Z_v(t))_{v \in V}$ be given by
		\[
		Z_v(t) = |\hat{Z}_v(t)|,
		\]
		then a matching policy $\Phi$ is said to be size-based if $\Phi_t$ depends only on $(Z(i))_{i \leq t-1}$ and not $(\hat{Z}(i))_{i \leq t-1}$.
		\item A matching policy $\Phi$ is said to be Markovian if $\Phi_t$ depends only on $t$, $\hat{Z}_{t-1}$ and $A_t$. Moreover, such policy is said to be stationary if $\Phi_t$ does not depend on $t$.
	\end{itemize}
\end{definition}

In rough terms, a size-based matching policy does not take into account the order of arrival and treat all class-$v$ items as equal. Many classical policies fall into this category, such as Match-the-Longest or MaxWeight. However, there are also examples of matching policies where the order of arrivals of items is taken into account, such as First-Come-First-Match, where an item of class $v$ is matched as soon as there is an available matching of type $e$ using class $v$.

If $\Phi$ is Markovian and size-based, then $Z$ is a Markov chain, and the fact that $\Phi$ stabilises $(G, \mu)$ is equivalent to saying that $Z$ is positive recurrent.

Stability is a central notion, for aside from guaranteeing finite expected waiting time, which is desirable for real-life applications such as organ exchange or ride-sharing, it also allows rigorous theoretical analysis. In particular, under stability, we can define the matching rate $\lambda_e \geq 0$ along a hyperedge $e \in E$, as the long-run average number of matches of type $e$ per unit of time.

Denote by $A = (A_{v, e})_{v \in V, e \in E}$ the incidence matrix of $G$ where $A_{v, e} = 1$ if and only if $v \in e$, and $A_{v, e} = 0$ otherwise. Per unit of time, there are $\mu_v$ items of class $v$ arriving, and under the assumption of stability, they will all be matched. As the number of matched items of class $v$ is precisely $\sum_{e \ni v} \lambda_e$ by definition, we have $\mu_v = \sum_{e \ni v} \lambda_e$, or in other words, $A \lambda = \mu$. This is called the conservation equation.
	\section{Online assignment policies and first stability criterion}
\label{Criterion for online assignment policies}

\subsection{Online assignment policies}
\label{Introduction of online assignment policies}

To define online assignment policies, we first define the notion of assignment.

\begin{definition}
	If a class-$v$ item is said to be assigned to the hyperedge $e \ni v$, it means that this item then can only be matched by a matching of type $e$. In particular, even if there are other available matchings of some type $f$ with $v \in f$, this item will still not be matched.
\end{definition}

Then, we enlarge the buffer to have one dedicated buffer $\hat{X}_{v, e}$ for each pair $(v, e)$ with $e \in E$ and $v \in V$. Formally, as in Section \ref{Matching policies}, $\hat{X}_{v, e}(t)$ is an element in $2^\mathbb{N}$, and $\hat{X}(t)$ is thus an element in $(2^\mathbb{N})^{V \times E}$.

Next, we define the notion of assignment rate.

\begin{definition}
	Given an arrival rate $\mu$, w call a tuple $\nu = (\nu_{v, e})_{(v, e) \in G}$ an assignment rate if it satisfies the following two conditions:
	\begin{itemize}
		\item For all $(v, e) \in G$, $\nu_{v, e} \geq 0$;
		\item For all $v \in V$, $\sum_{e \ni v} \nu_{v, e} = \mu_v$.
	\end{itemize}
\end{definition}

Then, an online assignment policy is defined as follow: at time $t$, we first choose an assignment rate $\nu = \nu(t)$. Note that as in the definition of matching policies in Section \ref{Matching policies}, \emph{a priori} the assignment rate $\nu(t)$ at time $t$ can depend on the whole history of arrivals and matchings up to time $t$.

Then, if a class-$v$ item arrives, we assign it to an edge $e \ni v$ with probability $\frac{\nu_{v, e}}{\mu_v}$. This explains the conditions for $\nu$: they are to ensure that for each $v \in V$, we have that $\left(\frac{\nu_{v, e}}{\mu_v}\right)_{e \ni v}$ is a probability distribution on the set of hyperedges $e$ containing $v$, so that the policy is well-defined. This also explains the choice of the name ``assignment rate'': with probability $\nu_{v, e}$, a class-$v$ item arrives and can only be matched by a type-$e$ matching.

An online assignment policy $\Phi$ is said to stabilise $(G, \mu)$ if at any given moment $T$, the expected time to clear the system is finite. In other words, for $T \geq 0$, let
\[
	\tau_T = \min \left\{ t \geq 0 | \forall (v, e) \in G, \hat{X}_{v, e}(t + T) = \emptyset\right\},
\]
then $\E[\tau_T \mid \mathcal{H}_T] < \infty$, in which case we say that $(G, \mu)$ is stabilisable within the class of online assignment policy.

If $\Phi$ is Markovian and size-based, meaning that $\nu(t)$ depends only on $t$, $A_t$ and $X(t)$, then $X$ is a Markov chain, and the stability of $(G, \mu)$ is equivalent to saying that $X$ is positive recurrent. Note that by definition, $\sum_{e \ni v} X_{v, e}(t) = Z_v(t)$, so if $X$ is positive recurrent, so is $Z$.

\subsection{Necessary condition for stability}
\label{Necessity for stability for online assigment policies}

Before getting to the actual condition, we first revisit the matching problem in case of graph, where the necessity (and sufficient) condition, so-called \texttt{NCOND}, was given by Mairesse and Moyal \cite{Mairesse2016}. In particular, when $G$ is a graph, $(G, \mu)$ is stabilisable if and only if for all independent set $U$, we have
\[
	\mu(U) < \mu(N(U)),
\]
where $N(U) = \{v \mid (u, v) \in E\}$ is the neighbourhood of $U$. In rough terms, per unit of time, there are $\mu(U)$ items arriving in $U$, and $\mu(N(U))$ items arriving in $N(U)$. If $U$ is the support of the buffer at time $t$, meaning $U = \{v \mid X_v(t) > 0\}$, then each time arriving in $N(U)$ can (and will) be matched with an item in $U$, so the total number of item in expectation increases by $\mu(U) - \mu(N(U))$. What \texttt{NCOND} says is that this quantity is negative for all possible support of the buffer, so regardless the current state, in expectation, the number of items decreases, so the system is stabilisable.

Back to the hypergraph case, we can find heuristically a similar condition. Indeed, consider the buffer state $X(t)$ at time $t$. Within one hyperedges $e$, for a given vertex $v \in e$, there are $X_{v, e}(t)$ class-$v$ items assigned to $e$, which can only leave by type-$e$ matchings. From the items in the buffer, there are precisely $\min_{v \in e} X_{v, e}(t)$ type-$e$ matchings available, and suppose we have heuristically made all of them, then we can assume
\[
	\min_{v \in e} X_{v, e}(t) = 0.
\]
Then, we can consider the set $U = \{(v, e) \mid X_{v, e}(t) > 0\} \subset G$ the support of the buffer. 

\begin{definition}
	\label{definition: assignment rate}
	A set $U \subset G$ is called a support if for each $e \in E$, the $e$-slice of $U$, given by $U_e = \{v \mid (v, e) \in U\}$, cannot be $e$, or in other words, $U_e \subsetneq e$.
\end{definition}

By the condition $\min_{v \in e} X_{v, e}(t) = 0$, the support of a buffer is a support in the sense of the previous definition. Conversely, for such a set $U$ satisfying the previous definition, we can construct a state $X(t)$ such that $U$ is the support of $X(t)$, simply by letting $X_{v, e}(t) = 1$ if $(v, e) \in U$ and $0$ otherwise. This plays the role of ``independent set'' in the graph analogy.

Next, suppose we choose an assignment rate $\nu = \nu(t)$ at time $t$, for a given pair $(v, e)$, in expectation, there are $\nu_{v, e}$ class-$v$ items arrived and assigned to $e$. On the other hand, a type-$e$ matching requires $|e|$ items; for each $u \in e$, we can think of each arrival of class-$u$ items as bringing $\frac{1}{|e|}$ matching. Each matching reduces the number of class-$v$ items assigned to $e$ by 1, so in expectation, the change of this number is roughly  
\[
	\nu_{v, e} - \frac{\nu_e}{|e|},
\]
where $\nu_e = \sum_{e \ni u} \nu_{u, e}$.

Similar to the \texttt{NCOND} above, we can expect an analogous condition.
\begin{equation}
	\label{eq:onlcond}
	\textrm{\parbox{.8\textwidth}{For all supports $U$, there exists an assignment rate $\nu^U$ such that for all $(v, e) \in U$,
		\[
		\nu^U_{v, e} < \frac{\nu^U_e}{|e|}.
		\]
	}}
\end{equation}

This is of course only an informal argument and some points are not satisfactory: Is it true that we can consider all class-$u$ items as equal? What does $\frac{1}{|e|}$ of a type-$e$ matching mean? Why should we have the number of class-$v$ items assigned to $e$ decrease for \emph{all} $(v, e) \in U$ in the support, whereas in \texttt{onlcond}, we only have the \emph{total} number of items decrease in expectation?

Fortunately, the simple behaviour of online assignment policies allows amenable analysis. Indeed, \eqref{eq:onlcond} is a necessity condition.

\begin{lemma}
	\label{lemma: necessity, online assignment}
	The condition \eqref{eq:onlcond} is necessary for stability within the class of online assignment policies: if $(G, \mu)$ is stabilisable by an online assignment policy, then \eqref{eq:onlcond} is satisfied. 
\end{lemma}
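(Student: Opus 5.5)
The plan is to argue by contraposition. Assume \eqref{eq:onlcond} fails for some support $U$. We build a linear functional of the enlarged buffer state $X$ that every matching leaves unchanged and that is a submartingale under every online assignment policy. Optional stopping at the clearing time then gives a contradiction with stability.

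\textbf{Step 1 (separation).} Let $P$ be the set of assignment rates for $\mu$; it is a nonempty compact convex polytope. Define the affine map $g : P \to \mathbb{R}^U$ by $g(\nu)_{v,e} = \nu_{v,e} - \nu_e/|e|$. If \eqref{eq:onlcond} fails for $U$, then the compact convex set $g(P)$ does not meet the open negative orthant $\{x \in \mathbb{R}^U : x < 0\}$. By the separating hyperplane theorem there is $y \in \mathbb{R}^U$, $y \neq 0$, with $\sum_{(v,e)\in U} y_{v,e}\, g(\nu)_{v,e} \geq c \geq \sum y_{v,e} x_{v,e}$ for every $x$ in the orthant. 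Since the orthant is a cone whose closure contains $0$, we get $y \geq 0$ and may take $c = 0$. Extend $y$ by $y_{v,e} = 0$ for $(v,e) \in G \setminus U$.

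\textbf{Step 2 (an invariant functional).} Set $\bar y_e = \frac{1}{|e|}\sum_{u\in e} y_{u,e}$ and $c_{v,e} = y_{v,e} - \bar y_e$, and let $F(X) = \sum_{(v,e)\in G} c_{v,e} X_{v,e}$.
\begin{itemize}
\item Since $\sum_{v\in e} c_{v,e} = 0$, a type-$e$ matching, which removes one item from each $\hat X_{v,e}$ with $v\in e$, does not change $F$. So $F$ changes only through arrivals, and each increment is bounded by $\max|c_{v,e}|$.
\item At time $t$, given the history and the chosen $\nu(t)$, the expected increment is $\sum c_{v,e}\nu_{v,e}(t) = \sum_{(v,e)} y_{v,e}\nu_{v,e}(t) - \sum_e \bar y_e \nu_e(t)$. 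This equals $\sum_{(v,e)\in U} y_{v,e}\, g(\nu(t))_{v,e} \geq 0$.
\end{itemize}
Hence $F(X(t))$ is a submartingale with bounded increments, whatever online assignment policy is used.

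\textbf{Step 3 (positive starting state and contradiction).} Since $y \neq 0$, choose $e$ with $y_{\cdot,e}\neq 0$ and $v \in e$ maximising $y_{v,e}$ over $e$. Because $U_e \subsetneq e$, some $w \in e\setminus U_e$ has $y_{w,e}=0$. Therefore $\bar y_e < y_{v,e}$, i.e. $c_{v,e} > 0$. Start from the initial buffer consisting of one class-$v$ item assigned to $e$, so that $F(X(0)) = c_{v,e} > 0$. If some online assignment policy stabilises $(G,\mu)$, then the clearing time $\tau = \tau_0$ has $\E[\tau] < \infty$. Optional stopping for a submartingale with bounded increments and an integrable stopping time then gives $0 = \E[F(X(\tau))] \geq F(X(0)) > 0$, which is a contradiction. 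Hence \eqref{eq:onlcond} must hold for every support $U$.

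\textbf{Expected obstacle.} The delicate step is Step 1 together with the sign choice in Step 3. The naive long-run-average argument only yields $\bar\nu_{v,e} = \bar\nu_e/|e|$ with equality, which cannot give the strict inequality; this is exactly the null-recurrent single-edge example with $\mu_u=\mu_v$. The separation argument produces the required $y\geq 0$. The support property $U_e\subsetneq e$ is precisely what guarantees an initial state with $F>0$.

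A minor technical point is the definition of stability. It must allow an arbitrary initial buffer, in the enlarged state space $\hat X$, including pre-assigned items. If it only allowed an empty start, I would instead take $T$ to be a time at which the process first reaches a state with $F>0$, and apply the same optional-stopping argument conditionally on $\mathcal H_T$. Such a time is reached with positive probability, for instance when a class-$v$ item arrives and is assigned to $e$. The constraint $\sum_{e\ni v}\nu_{v,e}=\mu_v$ does not by itself force $\nu_{v,e}>0$, so if the policy never assigns to that $(v,e)$ I would argue with a different coordinate that does get positive assignment.
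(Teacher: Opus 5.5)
Your proposal is correct and is essentially the paper's proof. Your $F(X)=\sum_{(v,e)\in G}(y_{v,e}-\bar y_e)X_{v,e}$ coincides with the paper's $L_-(Y)=\sum_{(v,e)\in U}w_{v,e}Y_{v,e}$, where $Y_{u,f}=X_{u,f}-\frac{1}{|f|}\sum_{v\in f}X_{v,f}$; it is built from the same nonnegative separating vector $w=y$, evaluated at the same one-item initial state, and gives the same contradiction with a finite-mean clearing time. Your write-up is in fact slightly tighter than the paper's in two places: your drift computation correctly counts arrivals to pairs $(v,e)\notin U$ in edges meeting $U$, and bounded increments with $\E[\tau]<\infty$ give a genuine optional-stopping conclusion where the paper only loosely cites Doob's inequality. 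Your fallback paragraph for empty-start stability is not needed, because the paper allows arbitrary pre-assigned initial buffers; as written it would also not cover a policy that keeps $F\equiv 0$ by only using pairs with $y_{v,e}=\bar y_e$.
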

\begin{proof}
	To make precise our informal argument above, we define an auxiliary sequence of random variable $Y = (Y_{v, e}(t))_{e \in E, v \in e, t \in \mathbb{N}}$. Firstly, $Y_{v, e}(0) = 0$ for all $e \in E$ and $v \in e$. Then, suppose at time $t$, a class-$v$ item arrives, and our current online assignment policy chooses an assignment rate $\nu = \nu(t)$, which then chooses this item to the hyperedge $e$. We have
	\[
	Y_{u, f}(t+1) = 
	\begin{cases}
		Y_{u, f}(t) + 1 - \frac{1}{|f|} & \text{ if } u = v \text{ and } f = e\\
		Y_{u, f}(t) - \frac{1}{|f|} & \text{ if } u \neq v \text{ and } f = e\\
		Y_{u, f}(t) & \text{ otherwise}
	\end{cases}.
	\]
	
	To start, we fix an online assignment policy $\Phi$ stabilising $(G, \mu)$ and show that $Y$ is a lower bound of $X$ in some appropriate sense. Indeed, by induction, we can show that for all $f \in E$ and $u \in f$, we have
	\[
		Y_{u, f}(t) = X_{u, f}(t) - \frac{1}{|f|} \sum_{f \ni v} X_{v, f}(t),
	\]
	so in particular, if $X(t) = 0$ for some $t$, we also have $Y(t) = 0$. Thus, since $(G, \mu)$ is stabilised, we have $Y(\tau) = 0$ for some $\tau$ positive and of finite mean.
	
	Now, assume for the sake of contradiction that \eqref{eq:onlcond} is not satisfied, meaning that there exists a support set $U \subset \{(v, e) \mid e \in E, v \in e\}$ such that for all $e \in E$, $U_e = \{v \mid (v, e) \in U\} \subsetneq e$, and such that for all assignment rate $\nu$, we have
	\[
		\nu_{v, e} \geq \frac{\nu_e}{|e|},
	\]
	for some $(v, e) \in U$.
	
	Given an assignment rate $\nu$, for $(v, e) \in U$, we define
	\[
		(\Delta \nu)_{v, e} = \nu_{v, e} - \frac{\nu_e}{|e|},
	\]
	which is a linear map acting on $\nu$. The set of assignment rates $\nu$ is a convex and compact polytope $\mathcal{P}$ in $\mathbb{R}^G$, defined by the following constraints:
	\begin{itemize}
		\item For all $(v, e) \in G$, $\nu_{v, e} \geq 0$;
		\item For all $v \in V$, $\sum_{e \ni v} \nu_{v, e} = \mu_v$.
	\end{itemize}
	Since $\Delta$ is a linear map, the image $\Delta(\mathcal{P})$ of $\mathcal{P}$ is also a (convex and compact) polytope, this time living in $\mathbb{R}^U$. The failure of \eqref{eq:onlcond} is equivalent to saying that every point in $\Delta(\mathcal{P})$ has at least one positive coordinate.
	
	In other words, $\Delta(\mathcal{P})$ and $\mathbb{R}_{\leq 0}^U$ have disjoint interior. As both sets are convex, hyperplane separation theorem implies that there exists an affine hyperplane $H$ separating $\Delta(\mathcal{P})$ and $\mathbb{R}_{< 0}^U$. Up to a translation, $H$ can be made to be tangent with $\mathbb{R}_{\leq 0}^U$, but since it is a cone, this means $H$ passes by the origin, or in other words, $H$ is a hyperplane in $\mathbb{R}^U$.
	
	Write $H = \{x \mid \langle w, x \rangle = 0\}$ for a normal vector $w$ of $H$ which is non-zero since $H$ is non-degenerate, the fact that $H$ separates $\Delta(\mathcal{P})$ and $\mathbb{R}_{\leq 0}^U$ means we can choose $w$ such that
	\begin{itemize}
		\item for all $x \in \Delta(\mathcal{P})$, we have $\langle x, w \rangle \geq 0$, and
		\item for all $y \in \mathbb{R}_{\leq 0}^U$, we have $\langle y, w \rangle \leq 0$.
	\end{itemize}
	By replacing $y$ with $-y$, the second property is equivalent to saying that for all $y \in \mathbb{R}_{\geq 0}^U$, we have $\langle y, w \rangle \geq 0$, or in other words, $w$ lies in the dual cone of the positive quadrant $\mathbb{R}_{\geq 0}^U$. But since this is a self-dual cone, we have $w \in \mathbb{R}_{\geq 0}^U$, or $w$ has only non-negative coordinates.
	
	Now, we define a function $L_-$ such that for a state $c \in \mathbb{R}^G$, we have 
	\[
		L_-(c) = \sum_{(v, e) \in U} w_{v, e} c_{v, e},
	\]
	By construction of $Y$, for $(v, e) \in G$, if $(v, e) \in U$ and a class-$v$ item arrives and is assigned to the hyperedge $e$, we will have
	\begin{align*}
		L_-(Y(t+1)) - L_-(Y(t)) 
		& = w_{v, e}\left(1 - \frac{1}{|e|}\right) - \sum_{(u, e) \in U, u \neq v} \frac{w_{u, e}}{|e|} \\
		& = w_{v, e} - \frac{\sum_{(u, e) \in U} w_{u, e}}{|e|},
	\end{align*}
	and $L_-(Y(t+1)) - L_-(Y(t)) = 0$ if $(v, e) \not\in U$. Therefore, conditioning on $\hat{X}(t)$ and the behaviour of $\Phi$ at time $t$ denoted by an assignment rate $\nu$, we have
	\begin{align*}
		\E\left[L_-(Y(t+1)) - L_-(Y(t))\right] 
		& = \sum_{(v, e) \in U} \nu_{v, e} \left(w_{v, e} - \frac{\sum_{(u, e) \in U} w_{u, e}}{|e|}\right) \\
		& = \sum_{(v, e) \in U} \nu_{v, e} w_{v, e} - \sum_{(v, e) \in U} \nu_{v, e} \frac{\sum_{(u, e) \in U} w_{u, e}}{|e|} \\
		& = \sum_{(v, e) \in U} \nu_{v, e} w_{v, e} - \sum_{(v, e) \in U} w_{v, e} \frac{\sum_{(u, e) \in U} \nu_{u, e}}{|e|} \\
		& = \sum_{(v, e) \in U} w_{v, e} \left(\nu_{v, e} - \frac{\sum_{(u, e) \in U} \nu_{u, e}}{|e|} \right) \\
		& = \langle w, \Delta \nu \rangle \geq 0.
	\end{align*}
	
	In particular, this makes $[L_- (Y(t))]_{t \in \mathbb{N}}$ a submartingale; at time $t = \tau$, we have $Y(\tau) = 0$, so $L_- (Y(\tau)) = 0$. Doob's submartingale inequality then forces $L_-(Y(t)) \leq 0$ for all $t$ almost surely.
	
	However, this is true regardless of the initial buffer state, and in particular $Y(0)$, but we can define $Y(0)$ such that $L_- (Y(0)) > 0$. Indeed, since $w$ is non-zero and has non-negative coefficients, there exists $(u, e) \in U$ such that $w_{u, e} > 0$. Let the initial buffer contain only one class-$u$ items assigned to $e$ and nothing else, we have that
	\[
		Y_{v, f}(0) = 
		\begin{cases}
			1 - \frac{1}{|e|} & \text{ if } v = u \text{ and } f = e \\
			-\frac{1}{|e|} & \text{ if } v \neq u \text{ and } f = e \\
			0 & \text{ otherwise}
		\end{cases}
	\]
	Suppose without loss of generality that $w_{u, e} = \max_{v \in e} w_{v, e}$, we have
	\[
		L_- (Y(0)) = w_{u, e} - \frac{\sum_{(v, e) \in U} w_{v, e}}{|e|} > 0.
	\]
	where the last inequality is because $U$ is a support, so there exists $v \in e$ such that $(v, e) \not\in U$. We thus have our desired contradiction.	
\end{proof}

\begin{remark}
	Once we get the bound
	\[
		\E\left[L_-(Y(t+1)) - L_-(Y(t))\right] = \langle w, \Delta \nu \rangle \geq 0,
	\]
	it is tempting to conclude by Foster's negative criterion, which avoids the last part of the proof. However, this requires assuming that $Y$ is Markov chain, which also means $X$ is a Markov chain, and the proof then is only valid within the class of Markovian online assignment policies.
\end{remark}

\subsection{Condition \eqref{eq:onlcond} is sufficient}
\label{Sufficiency for stability for online assigment policies}

In this subsection, we show that \eqref{eq:onlcond} is also sufficient. The idea is similar to that by Bu\v{s}i\'{c}, Gupta, and Mairesse \cite[Section 7]{Busic2013}: we will introduce a Lyapunov function $L_+$, and a \emph{deterministic} policy of type MaxWeight which tries to minimise $L_+$ at every time step.

Then, given a matching model $(G, \mu)$ which is stabilisable, per the previous subsection, we know that \eqref{eq:onlcond} is satisfied. To show that this policy stabilises $(G, \mu)$, using \eqref{eq:onlcond}, we construct a randomised policy depending on $\mu$ which achieves negative drift with respect to $L_+$ outside a finite region. Since the $L_+$-MaxWeight policy achieves the smallest possible drift, it too also stabilises $(G, \mu)$.

The challenge here lies in the fact that we cannot use the quadratic Lyapunov function $L(c) = \sum_{(v, e) \in U} c_{v, e}^2$ as did Bu\v{s}i\'{c}, Gupta, and Mairesse. Whilst one can see how $L$ changes when an item arrives and is assigned, it is much more difficult to analyse the drift when a matching is realised.

To mitigate this problem, we introduce a new Lyapunov function. For a given state $c \in \mathbb{N}^G$, we define
\[
	L_+(c) = \frac{1}{2} \sum_{e \in E} \sum_{\{u, v\} \in e^2} (c_{u, e} - c_{v, e})^2,
\]
where the second sum is over all \emph{unordered} pair $(u, v)$.

This function captures not the number of items in each category, but rather the imbalance of the buffer within each hyperedge. Or, in other words, it captures how far away we are from the ideal state where all items are matched. Indeed, notice that the term $\sum_{\{u, v\}\in e^2} (c_{u, e} - c_{v, e})^2$ does not depend on the number of matchings realised (or realisable) $\min_{u \in e} c_{u, e}$, so it is agnostic to the number of matching realised (or realisable) and we avoid altogether the problem of analysing the drift when a matching occurs.

\begin{lemma}
	\label{lemma: sufficience, online assignment}
	The condition \eqref{eq:onlcond} is sufficient for stability within the class of online assignment policies: if \eqref{eq:onlcond} is satisfied, then $X$ is positive recurrent. Moreover, if $(G, \mu)$ is stabilisable within this class, then so is it by a stationary size-based online assignment policy.
\end{lemma}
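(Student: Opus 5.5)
We prove the lemma with the Lyapunov function $L_+$ and a Foster–Lyapunov drift argument, working with the Markov chain $X$ under a stationary size-based policy. We assume greedy clearing: as soon as $\min_{v\in e} X_{v,e}>0$, a type-$e$ matching is realised. This leaves $L_+$ unchanged, because every $c_{u,e}$ in the slice $e$ drops by $1$, and after clearing each slice satisfies $\min_{v\in e} X_{v,e}(t)=0$. Hence the support of $X(t)$ is always a support in the sense of Definition \ref{definition: assignment rate}.

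We first compute the one-step drift. Suppose a class-$v$ item is assigned to $e$. Then only the slice $e$ changes, and
\[
L_+(c+\mathbf{1}_{v,e})-L_+(c)=\sum_{u\in e, u\neq v}(c_{v,e}-c_{u,e})+\tfrac{|e|-1}{2}=|e|\,c_{v,e}-S_e(c)+\tfrac{|e|-1}{2},
\]
where $S_e(c)=\sum_{u\in e}c_{u,e}$. Averaging over an assignment rate $\nu$ gives
\[
\E[\Delta L_+]=\sum_{e}\sum_{v\in e}\nu_{v,e}\big(|e|c_{v,e}-S_e\big)+O(1)=\sum_{(v,e)}|e|\,c_{v,e}\Big(\nu_{v,e}-\tfrac{\nu_e}{|e|}\Big)+O(1),
\]
which is the same exchange of sums as in the proof of Lemma \ref{lemma: necessity, online assignment}. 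Since each slice has a zero coordinate, only $(v,e)$ in the support $U$ of $c$ contribute.

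Next we build a comparison policy. For each support $U$ (there are finitely many), \eqref{eq:onlcond} gives a rate $\nu^U$ with $\nu^U_{v,e}-\nu^U_e/|e|\le-\varepsilon$ on $U$, for a uniform $\varepsilon>0$. The randomised policy that uses $\nu^{\mathrm{supp}(X(t))}$ then has drift at most $-\varepsilon\sum_{(v,e)\in U} c_{v,e}+O(1)$. Now consider the $L_+$-MaxWeight policy, which chooses the assignment (deterministically, per arriving class $v$, the $e$ minimising $|e|c_{v,e}-S_e$) so as to minimise the expected increment. Its drift is no larger than that of the comparison policy. This policy is stationary and size-based, so $X$ is a Markov chain and the second claim follows from the first.

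The main obstacle is linking $\sum_{U}c_{v,e}$ to $L_+$, so that negative drift holds outside a finite set. In a slice with a zero coordinate we have $\max_{u\in e} c_{u,e}\le\sqrt{2L_+(c)}$ and, conversely, $L_+\le C(\sum_U c_{v,e})^2$. Hence the drift is at most $-\varepsilon'\sqrt{L_+}+O(1)$, and $\{L_+\le K\}$ is a finite set because every coordinate is bounded by $\sqrt{2K}$. We then apply Foster's criterion either to $\sqrt{L_+}$ or directly to $L_+$ with drift $-\varepsilon'\|c\|$; the increments of $L_+$ are $O(\|c\|)$, which gives the bounded-jump condition that justifies passing to $\sqrt{L_+}$. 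This yields positive recurrence of $X$. Finally, positive recurrence of $X$ gives a finite expected clearing time, which by the earlier remark implies stability of $Z$.
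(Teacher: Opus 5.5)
Your proposal is correct and follows essentially the paper's argument. It uses the same Lyapunov function $L_+$, the same drift identity $\E[\Delta L_+]=\sum_{(v,e)}|e|\,c_{v,e}(\nu_{v,e}-\nu_e/|e|)+O(1)$ (your constant $(|e|-1)/2$ is the exact one, where the paper writes $|e|/2$), a uniform $\varepsilon$ over the finitely many supports, and Foster's criterion; you apply Foster to $L_+$-MaxWeight by comparing it with the randomised $\nu^{U}$ policy, whereas the paper applies it to the randomised policy and treats MaxWeight in a remark. Your $\sqrt{L_+}$ detour is harmless but unnecessary: Foster only needs drift at most $-\varepsilon$ outside a finite set, and $\{c:\|c\|_1<K\}$ is already finite, which is exactly the direct route the paper takes.
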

\begin{proof}
	First, assuming that \eqref{eq:onlcond} holds, we introduce a randomised policy. At time $t$, consider the buffer state $X(t)$ and assume that we have realised all the available matchings, so that for all $e \in E$, we have $\min_{u \in e} X_{u, e}(t) = 0$. Let $U = \{(v, e) \mid X_{v, e}(t) > 0\}$ be the support of $X(t)$, by $\eqref{eq:onlcond}$, there exists an assignment rate $\nu$ such that for all $(u, e) \in U$, we have 
	\[
		\nu_{u, e} < \frac{\nu_e}{|e|}.
	\]
	Note that $\nu$ can be chosen to depend only on $U$, and thus on $X(t)$, thus this policy is size-based and stationary.
	
	We then define our randomised policy as the online assignment policy associated with $\nu$: if at time $t$, a class-$v$ item arrives, we assign it to the edge $e \ni v$ with probability $\frac{\nu_{v, e}}{\mu_v}$.
	
	Now for the analysis, consider the buffer state $X(t) = c \in \mathbb{N}^G$ at time $t$. For a given $(u, e) \in G$, assume that a class-$u$ item arrives and is assigned to $e$, we have
	\begin{align*}
		\Delta L_+ 
		& = L_+(X(t+1)) - L_+(X(t)) \\
		& = \frac{1}{2} \left[\sum_{v \in e} 2(c_{u, e} - c_{v, e}) + 1\right] \\
		& = \frac{|e|}{2} + |e| c_{u, e} - c_e,
	\end{align*}
	where $c_e = \sum_{v \in e} c_{v, e}$. For a given assignment rate $\nu$, we have
	\begin{align*}
		\E[\Delta L_+] 
		& = \sum_{(u, e) \in G} \nu_{u, e} \left[\frac{|e|}{2} + |e| c_{u, e} - c_e\right] \\ 
		& = \sum_{(u, e) \in G} \nu_{u, e} \frac{|e|}{2} + \sum_{(u, e) \in G} \nu_{u, e} [|e| c_{u, e} - c_e] \\
	\end{align*}
	The first term can be uniformly bounded by noticing that by definition of assignment rate, $\sum_{(u, e) \in G} \nu_{u, e} = \sum_{u \in V}\mu_v = 1$, which implies
	\[
		\sum_{(u, e) \in G} \nu_{u, e} \frac{|e|}{2} \leq \sum_{(u, e) \in G} \nu_{u, e} \frac{|V|}{2} = \frac{|V|}{2}.
	\]
	For the second term, we rewrite
	\begin{align*}
		\sum_{(u, e) \in G} \nu_{u, e} [|e| c_{u, e} - c_e] 
		& = \sum_{e \in E} \left[\sum_{u \in e} \nu_{u, e} \left(|e| c_{u, e} - c_e\right)\right] \\
		& = \sum_{e \in E} \left[\sum_{u \in e} \nu_{u, e} |e| c_{u, e} - \left(\sum_{u \in e} \nu_{u, e}\right) c_e \right] \\
		& = \sum_{e \in E} \left[\sum_{u \in e} \nu_{u, e} |e| c_{u, e} - \nu_e c_e \right] \\
		& = \sum_{e \in E} \left[\sum_{u \in e} \nu_{u, e} |e| c_{u, e} - \nu_e \sum_{u \in e} c_{u, e} \right] \\
		& = \sum_{e \in E} \left[\sum_{u \in e} c_{u, e} |e| \left(\nu_{u, e} - \frac{\nu_e}{|e|}\right) \right] \\
		& = \sum_{(u, e) \in G} c_{u, e} |e| \left(\nu_{u, e} - \frac{\nu_e}{|e|}\right). \\
	\end{align*}
	Assume now that all matches realisable are realised, meaning that for all $e \in E$, we have $\min_{u \in e} c_{u, e} = 0$. Let $U = \{(v, e) \mid X_{v, e} (t) > 0\}$ be the support of $X$, we can rewrite
	\begin{align*}
		\E[\Delta L_+] 
		& \leq \frac{|V|}{2} + \sum_{(u, e) \in G} c_{u, e} |e| \left(\nu_{u, e} - \frac{\nu_e}{|e|}\right) \\
		& = \frac{|V|}{2} + \sum_{(u, e) \in U} c_{u, e} |e| \left(\nu_{u, e} - \frac{\nu_e}{|e|}\right)
	\end{align*}
	By \eqref{eq:onlcond}, there exists some assignment rate $\nu^U$ such that 
	\[
		\nu^U_{u, e} < \frac{\nu^U_e}{|e|}
	\]
	for all $(u, e) \in U$. Let $\varepsilon_U = \min_{(u, e) \in U} \left(\frac{\nu_e}{|e|} - \nu_{u, e}\right) > 0$, we have
	\[
		\nu^U_{u, e} - \frac{\nu^U_e}{|e|} \leq -\varepsilon_U.
	\]
	Let $\varepsilon_G = \min_U \varepsilon_U > 0$ which is well-defined because $G$ is finite and $U \subset G$, then $\varepsilon_G$ depends only on $G$ and $\mu$, and moreover, 
	\[
		\nu^U_{u, e} - \frac{\nu^U_e}{|e|} \leq -\varepsilon_G,
	\]
	which implies
	\begin{align*}
		\E[\Delta L_+] 
		& \leq \frac{|V|}{2} + \sum_{(u, e) \in U} c_{u, e} |e| \left(\nu^U_{u, e} - \frac{\nu^U_e}{|e|}\right) \\
		& \leq \frac{|V|}{2} - \varepsilon_G \sum_{(u, e) \in U} c_{u, e} |e| \\
		& \leq \frac{|V|}{2} - \varepsilon_G \sum_{(u, e) \in U} c_{u, e} = \frac{|V|}{2} - \varepsilon_G \|c\|_1,
	\end{align*}
	where the last bound is because $e$ is non-empty for all $e \in E$.
	
	Fix $\varepsilon > 0$, for all $c \in \mathbb{N}^G$ such that $\|c\|_1 \geq \frac{1}{\varepsilon_G}\left(\frac{|V|}{2} + \varepsilon\right)$, we have $\E[\Delta L_+] \leq -\varepsilon$, thus by Foster's positive criterion, $X$ is positive recurrent.
\end{proof}

\begin{remark}
	The stationary size-based online assignment policy introduced in the proof of Lemma \ref{lemma: sufficience, online assignment} is randomised. To obtain a deterministic policy, it suffices to remark that since $L_+$-MaxWeight minimises the value of $L_+$, the drift $\E[\Delta L_+] \leq -\varepsilon$ is also achieved with $L_+$-MaxWeight. Therefore, $L_+$-MaxWeight also stabilises $(G, \mu)$.
\end{remark}

Together with Lemma \ref{lemma: necessity, online assignment}, we have
\begin{theorem}
	\label{theorem: necessity and sufficiency of onlcond}
	The condition \eqref{eq:onlcond} is necessary and sufficient for stability within the class of online assignment policies. Moreover, if $(G, \mu)$ is stabilised by an online assignment policy, so is it by $L_+$-MaxWeight.
\end{theorem}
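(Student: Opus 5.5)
The theorem has two parts. The first part is equivalence, and I will assemble it directly from the two lemmas already proved. Lemma \ref{lemma: necessity, online assignment} gives that \eqref{eq:onlcond} holds whenever $(G,\mu)$ is stabilisable by some online assignment policy. Conversely, Lemma \ref{lemma: sufficience, online assignment} exhibits, under \eqref{eq:onlcond}, a stationary size-based randomised online assignment policy under which $X$ is positive recurrent. Since $\sum_{e\ni v}X_{v,e}=Z_v$, this makes $Z$ positive recurrent. For a Markovian size-based policy, positive recurrence of $X$ gives finite expected clearing time from every state, so this policy stabilises $(G,\mu)$ in the sense of the definition. That proves the equivalence.

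For the second part, I first define $L_+$-MaxWeight precisely. At time $t$, with current state $c$ (after all realisable matchings are made), an arriving class-$v$ item is assigned to a hyperedge $e\ni v$ that minimises $L_+$ of the post-arrival state. Ties are broken by a fixed order, so the policy is deterministic, stationary and size-based. Because $L_+$ does not depend on $\min_{u\in e}c_{u,e}$, realising matchings does not change $L_+$. The one-step increment when the item goes to $e$ is therefore exactly $\frac{|e|}{2}+|e|c_{v,e}-c_e$, as computed in the proof of Lemma \ref{lemma: sufficience, online assignment}.

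The key comparison is made conditionally on the arriving class $v$. MaxWeight chooses $e$ minimising this increment. The randomised policy instead averages the same increment over $e\ni v$ with weights $\nu^U_{v,e}/\mu_v$. A minimum is at most any convex combination, so
\[
\E[\Delta L_+ \mid c]_{\mathrm{MW}} \le \sum_{v}\mu_v\sum_{e\ni v}\frac{\nu^U_{v,e}}{\mu_v}\Bigl(\tfrac{|e|}{2}+|e|c_{v,e}-c_e\Bigr)\le \frac{|V|}{2}-\varepsilon_G\|c\|_1 .
\]
The last inequality is the bound from Lemma \ref{lemma: sufficience, online assignment}. Hence the drift of $L_+$ under MaxWeight is at most $-\varepsilon$ outside the finite set $\{\|c\|_1< (|V|/2+\varepsilon)/\varepsilon_G\}$.

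To finish, I apply Foster's criterion to the Markov chain $X$ under MaxWeight, restricted to states in which all available matchings have been made. One point needs care: $L_+$ is not a norm, since it vanishes on balanced states. Foster's criterion only requires $L_+\ge 0$, negative drift outside a finite set, and bounded drift inside it, and the finite set here is defined through $\|c\|_1$ rather than $L_+$. This is also where I expect the only real subtlety. Because all matchings have been made, every hyperedge has some coordinate equal to zero, so $L_+(c)\ge \frac12\sum_{(u,e)}c_{u,e}^2$ up to constants. On such states $L_+$ is therefore coercive, its sublevel sets are finite, and Foster's criterion applies. This gives positive recurrence of $X$ under $L_+$-MaxWeight whenever \eqref{eq:onlcond} holds, which is whenever some online assignment policy stabilises $(G,\mu)$.
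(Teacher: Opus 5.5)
Your proposal is correct and follows the paper's route. The equivalence is exactly the combination of Lemma \ref{lemma: necessity, online assignment} and Lemma \ref{lemma: sufficience, online assignment}, and the $L_+$-MaxWeight claim is the paper's remark that minimising $L_+$ gives drift no larger than the randomised policy's; you make that remark precise by comparing, for each arriving class $v$, the minimal increment with its $\nu^U_{v,e}/\mu_v$-weighted average.

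Two harmless quibbles. The exact increment is $\frac{|e|-1}{2}+|e|c_{v,e}-c_e$, so the formula you quote from the paper is an upper bound rather than exact; the shift is the same constant for every $e$, so it changes neither the argmin nor the drift bound. Also, the coercivity of $L_+$ on matched states is not needed, since your exceptional finite set is already defined through $\|c\|_1$.
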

	\section{General policies and second stability criterion.}
\label{Criterion for general policies}

In Section \ref{Criterion for online assignment policies}, we have introduced the class of online assignment policies, as well as demonstrated a necessary and sufficient condition for a matching model $(G, \mu)$ to be stabilisable by a policy in this class. However, it is \textit{a priori} possible that $(G, \mu)$ is not stabilisable by any such policies, but nevertheless is stabilisable by a policy outside the class.

In this Section, we enlarge the framework of online assignment policies to incorporate all possible policies. This modification still allows analysis, and another necessary and sufficient for stability similar to \eqref{eq:onlcond} is devised. As a corollary, we show that $L_+$-MaxWeight has maximum stability region.

\subsection{Reassignment and general policies}

The framework of online assignment policies has already almost captured all the general policies. What is missing is the reassignment, or in other words, the possibility of reassigning items.

\begin{definition}
	\label{definition: reassignment rate}
	Given a support $U$, we call a tuple $\gamma = (\gamma_{v, e, e'})_{v \in e, e'}$ a reassignment rate if it satisfies the following conditions:
	\begin{itemize}
		\item $\gamma_{v, e, e'} \geq 0$;
		\item for all $(v, e) \not\in U$ and $e' \ni v$, we have $\gamma_{v, e, e'} = 0$;
		\item for all $v$, we have $\sum_{e \ni v} \sum_{e' \ni v} \gamma_{v, e, e'} = 1$.
	\end{itemize}
	We say that this reassignment rate $\gamma$ is adapted to $U$.
\end{definition}

Given \emph{any} policy $\Phi$, its behaviour at time $t$ conditioned on $(\hat{X}_{v, e}(i))_{i \leq t-1}$ can be described by a triple $(\alpha, \nu, \gamma)$ where $0 \leq \alpha \leq 1$, $\nu$ is an assignment rate, and $\gamma$ is a reassignment rate adapted to the support $U$ of $X(t)$, as follow:
\begin{itemize}
	\item With probabillity $\alpha$, we accept the arriving item. If a class-$v$ items arrives, we assign it to the hyperedge $e \ni v$ with probability $\frac{\nu_{v, e}}{\mu_v}$;
	\item With probability $1-\alpha$, we temporarily hold on the arrival. Instead, in this case, with probability $\gamma_{v, e, e'}$, we reassign one class-$v$ items from the hyperedge $e$ to $e'$.
\end{itemize}
This last point is why we have the condition that for all $(v, e) \not\in U$ and $e' \ni v$, we have $\gamma_{v, e, e'} = 0$, namely to make sure that we can reassign items only when there are actual items in the buffer to be reassigned.

\begin{remark}
	The special case $\alpha = 0$ corresponds to online assignment policies. Indeed, $1-\alpha$ can be seen as the probability that we choose to reassign an item. 
\end{remark}

\subsection{Necessary condition for stability}

Now we see how to adapt the intuition from Subsection \ref{Introduction of online assignment policies}. As before, if we accept the arriving item, then the change of the number of class-$v$ items assigned to $e$ is 
\[
	\nu_{v, e} - \frac{\nu_e}{|e|}.
\]
If, however, we temporarily hold on the arrival, then the probability that a class-$v$ item is reassigned \emph{to} $e$ is $\sum_{e' \ni v, e' \neq e} \gamma_{v, e', e}$, and similarly, the probability that a class-$v$ item is reassigned \emph{from} $e$ is $\sum_{e' \ni v, e' \neq e} \gamma_{v, e, e'}$, thus the change of the number of class-$v$ items assigned to $e$ in this case is
\[
	\sum_{e' \ni v, e' \neq e} \gamma_{v, e', e} - \gamma_{v, e, e'}.
\]
However, recall that there is also the effect of bringing $\frac{1}{|e|}$ matchings for each item. In particular, this was the reason why we have the term $-\frac{\nu_e}{|e|}$ in case of assignment. 

Now denote $\Gamma_{u, e} = \sum_{e' \ni v, e' \neq e} \gamma_{v, e', e} - \gamma_{v, e, e'}$ and $\Gamma_e = \sum_{u \in e} \Gamma_{u, e}$, a similar reasoning in case of reassignment means we expect the change of the number of class-$v$ items assigned to $e$ in this case to be
\[
	\Gamma_{u, e} - \frac{\Gamma_e}{|e|}.
\]
In total, we have the change of the number of class-$v$ items assigned to $e$ to be
\[
	\alpha \left(\nu_{v, e} - \frac{\nu_e}{|e|}\right) + (1 - \alpha) \left(\Gamma_{u, e} - \frac{\Gamma_e}{|e|}\right).
\]
As in the Subsection \ref{Introduction of online assignment policies}, we also expect that if $(G, \mu)$ is stabilisable by some policy, then this number should be strictly negative for all $(v, e) \in U$ for all support $U$, though the exact choice of $\nu$, $\gamma$, and $\alpha$ may depend on $U$.

Again, as for the online assignment polices, many questions arise: is it justified when multiple reassignments occur between two arrivals? Is it justified to also take into account the effect of bringing $\frac{1}{|e|}$ matchings for each reassigned item?

However, this description does facilitate feasible analysis. As for online assignment policy and condition \eqref{eq:onlcond}, we may conjecture the following condition
\begin{equation}
	\label{eq:gencond}
	\textrm{\parbox{.8\textwidth}{For all supports $U$, there exists an assignment rate $\nu^U$, a reassignment rate $\gamma^U$, and a number $0 \leq \alpha^U \leq 1$ such that for all $(v, e) \in U$,
		\[
			\alpha^U \nu^U_{v, e} + \left(1 - \alpha^U\right) \Gamma_{u, e} < \frac{1}{|e|} \left[\alpha^U \nu^U_e + \left(1 - \alpha^U\right) \Gamma_e\right].
		\]
	}}
\end{equation}

First, we show that this condition is necessary.

\begin{lemma}
	\label{lemma: necessity, general policies}
	The condition \eqref{eq:gencond} is necessary for stability: if $(G, \mu)$ is stabilisable, then \eqref{eq:gencond} is satisfied. 
\end{lemma}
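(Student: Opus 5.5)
We mirror the proof of Lemma \ref{lemma: necessity, online assignment}, replacing the one-parameter family of assignment rates by the set of admissible triples $(\alpha, \nu, \gamma)$. Fix a policy $\Phi$ stabilising $(G,\mu)$ and describe its behaviour at each step by a triple $(\alpha, \nu, \gamma)$, as explained above. We then extend the auxiliary process $Y$:
\[
Y_{u,f}(t) = X_{u,f}(t) - \frac{1}{|f|}\sum_{v \in f} X_{v,f}(t).
\]
Its increments are as follows. On an accepted arrival of class $v$ assigned to $e$, it moves exactly as before. On a reassignment of a class-$v$ item from $e$ to $e'$, it moves by $-(1-\frac1{|e|})$ at $(v,e)$, by $+\frac1{|e|}$ at $(u,e)$ for $u\neq v$, and symmetrically with opposite signs along $e'$. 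Matchings leave $Y$ unchanged, since $Y$ is invariant under subtracting one from every coordinate of a hyperedge. As before, $X(t)=0$ implies $Y(t)=0$, so stability gives a hitting time $\tau$ of finite mean with $Y(\tau)=0$.

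Suppose, for contradiction, that \eqref{eq:gencond} fails for some support $U$. Let $\mathcal{Q}$ be the set of triples $(\alpha,\nu,\gamma)$ with $\nu\in\mathcal P$, $\gamma$ adapted to $U$ and $\alpha\in[0,1]$. Consider the set of vectors
\[
D(\alpha,\nu,\gamma)_{v,e} = \alpha\Bigl(\nu_{v,e}-\tfrac{\nu_e}{|e|}\Bigr) + (1-\alpha)\Bigl(\Gamma_{v,e}-\tfrac{\Gamma_e}{|e|}\Bigr), \qquad (v,e)\in U .
\]
The map is bilinear in $\alpha$ and $(\nu,\gamma)$, so the image need not be convex directly. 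However, it equals the union of segments between $\Delta(\mathcal P)$ and the image of the reassignment-rate polytope, which is exactly the convex hull of the two compact polytopes. Hence the image is a compact convex set. Failure of \eqref{eq:gencond} means this set misses the open negative orthant $\mathbb R^U_{<0}$. Separating as in Lemma \ref{lemma: necessity, online assignment} yields a nonzero $w\in\mathbb R^U_{\ge0}$ with $\langle w, D\rangle\ge 0$ on the whole image, and in particular on both extreme pieces.

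We then define $L_-(c)=\sum_{(v,e)\in U}w_{v,e}c_{v,e}$ and compute the conditional drift of $L_-(Y)$ given the history. The accepted-arrival part contributes $\alpha\langle w,\Delta\nu\rangle$, by the same computation as before. The reassignment part contributes $(1-\alpha)$ times a linear expression in $\gamma$, which we must identify with $\langle w, \Gamma - \Gamma_\cdot/|\cdot|\rangle$. This identification is the main obstacle. Terms of $Y$ outside $U$ are weighted by zero. We need to check that the contribution of moving into or out of $(v,e)\notin U$ is still captured by $\Gamma_{v,e}-\Gamma_e/|e|$ restricted to $U$; the same rearrangement of sums used in Lemma \ref{lemma: necessity, online assignment}, with the symmetric role of $u$ and $v$ inside $e$, should give it. We would also have to justify that several reassignments between two arrivals can be split into separate unit time steps. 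The paper's description of policies via triples suggests taking this for granted, possibly by a time change that preserves finiteness of $\E\tau$.

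With the drift shown to be $\ge0$, $L_-(Y(t))$ is a submartingale. Optional stopping at $\tau$, as in the earlier proof, forces $L_-(Y(0))\le 0$ for every initial state. This contradicts the same initial state as before: a single class-$u$ item assigned to $e$, with $w_{u,e}=\max_{v\in e} w_{v,e}>0$. Since $U_e\subsetneq e$, this gives $L_-(Y(0)) = w_{u,e} - \frac{1}{|e|}\sum_{(v,e)\in U} w_{v,e} > 0$.
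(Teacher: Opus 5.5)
Your skeleton is the paper's (extend $Y$, separate, submartingale, optional stopping), and your convex-hull remark about the image of $(\alpha,\nu,\gamma)\mapsto D$ supplies a detail the paper skips. But the drift step fails, and not for the reason you flag.

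The identity you call the main obstacle is fine. Extend $w$ by $0$ off $U$, and put $\bar w_e=\frac{1}{|e|}\sum_{u\in e}w_{u,e}$ and $g_{v,e}=w_{v,e}-\bar w_e$. Then $L_-(Y)=\sum_{(v,e)\in G}g_{v,e}X_{v,e}$. Moving a class-$v$ item from $e$ to $e'$ changes $L_-(Y)$ by $g_{v,e'}-g_{v,e}$. Averaging over $\gamma$ gives exactly $\sum_{(v,e)\in U}w_{v,e}(\Gamma_{v,e}-\Gamma_e/|e|)$ for \emph{every} $\gamma$, with $\Gamma_e$ summed over all of $e$.

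The problem is the sign. Your $w$ comes from separating only over reassignment rates adapted to the fixed failing support $U$. The policy's $\gamma$ at time $t$, however, is adapted to the support of $X(t)$, which is not $U$ and changes along the trajectory. Moving an item out of a pair $(v,e)\notin U$ still shifts $Y_{u,e}$ and $Y_{u,e'}$ for pairs in $U$, and can strictly decrease $L_-$.

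Concretely, take edges $a=\{1,2\}$, $b=\{2,3\}$, $c=\{3,4\}$, with $\mu_1\ge\mu_2$ and $U=\{(1,a)\}$. Since vertex $1$ lies only in $a$, the only $\gamma$ adapted to $U$ sits on $(1,a,a)$, and $\nu_{1,a}=\mu_1$. Hence $\Gamma=0$ and $D_{1,a}=\alpha(\mu_1-\nu_{2,a})/2\ge0$, so \eqref{eq:gencond} fails at $U$, with $w=w_{1,a}>0$. Yet $L_-(Y)=\frac{w_{1,a}}{2}(X_{1,a}-X_{2,a})$ drops by $w_{1,a}/2$ each time a class-$2$ item is moved from $b$ to $a$. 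A policy may do this whenever $X_{2,b}>0$. So $L_-(Y)$ need not be a submartingale; the lemma is still true here, but this argument does not show it. This is exactly the point the paper's own proof passes over with ``the rest goes analogously''. In the online case the issue cannot arise, because the set of assignment rates does not depend on the state.

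One way to close the gap also makes your time-change concern moot. Taking $\alpha=1$ shows that failure of \eqref{eq:gencond} at $U$ implies failure of \eqref{eq:onlcond} at $U$. The online separation then gives $w\ge0$, $w\ne0$, with $\langle w,\Delta\nu\rangle=\sum_{(v,e)\in G}\nu_{v,e}g_{v,e}\ge0$ for all assignment rates $\nu$. Equivalently, $\sum_{v}\mu_v h_v\ge0$, where $h_v=\min_{e\ni v}g_{v,e}$.

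Now drop the assignment bookkeeping and use $\Psi(Z)=\sum_v h_v Z_v$. (Equivalently: park each item on a minimiser of $e\mapsto g_{v,e}$ and move it only when it is matched, so every move has nonnegative effect.) A type-$e$ matching changes $\Psi$ by $-\sum_{v\in e}h_v\ge-\sum_{v\in e}g_{v,e}=0$. An arrival has mean increment $\sum_v\mu_v h_v\ge0$. So $\Psi(Z(t))$ is a submartingale under every policy, and the bound $|\Psi(Z(t\wedge\tau))|\le\max_v|h_v|\,(\|Z(0)\|_1+\tau)$ justifies optional stopping.

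Finally, some $h_u>0$. Otherwise $\mu>0$ forces $h\equiv0$, so on each $e$ the numbers $g_{v,e}$ are nonnegative with zero sum. Hence $w_{v,e}=\bar w_e$ for all $v\in e$, and $U_e\subsetneq e$ then gives $w=0$, a contradiction. Starting from a single class-$u$ item therefore gives the required contradiction.
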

\begin{proof}
	The idea of the proof is very similar to that of Lemma \ref{lemma: necessity, online assignment}, only that we have to adapt the construction of $Y$ to account for reassignment.
	
	we define an auxiliary Markov chain $Y = (Y_{v, e}(t))_{e \in E, v \in e, t \in \mathbb{N}}$. Firstly, $Y_{v, e}(0) = 0$ for all $e \in E$ and $v \in e$. Then, suppose at time $t$, our current online assignment policy chooses an assignment rate $\nu = \nu(t)$, a reassignment rate $\gamma = \gamma(t)$ and a parameter $\alpha = \alpha(t)$ such that $0 \leq \alpha \leq 1$. If a class-$v$ item arrive and our policy then chooses this item to the hyperedge $e$, then we let
	\[
		Y_{u, f}(t+1) = 
		\begin{cases}
			Y_{u, f}(t) + 1 - \frac{1}{|f|} & \text{ if } u = v \text{ and } f = e\\
			Y_{u, f}(t) - \frac{1}{|f|} & \text{ if } u \neq v \text{ and } f = e\\
			Y_{u, f}(t) & \text{ otherwise}
		\end{cases}.
	\]
	If however, it chooses to reassign a class-$v$ item from $e$ to $e'$, then we let
	\[
		Y_{u, f}(t+1) = 
		\begin{cases}
			Y_{u, f}(t) - 1 + \frac{1}{|f|} & \text{ if } u = v \text{ and } f = e\\
			Y_{u, f}(t) + \frac{1}{|f|} & \text{ if } u \neq v \text{ and } f = e\\
			Y_{u, f}(t) + 1 - \frac{1}{|f|} & \text{ if } u = v \text{ and } f = e'\\
			Y_{u, f}(t) - \frac{1}{|f|} & \text{ if } u \neq v \text{ and } f = e'\\
			Y_{u, f}(t) & \text{ otherwise}
		\end{cases}.
	\]
	Now by induction, we can show that for all $f \in E$ and $u \in f$, we have
	\[
		Y_{u, f}(t) = X_{u, f}(t) - \frac{1}{|f|} \sum_{f \ni v} X_{v, f}(t),
	\]
	so it is as if no reassignment ever occurs. Indeed, the modification of $Y$ when a reassignment occurs is precisely to offset the effect of assignment done earlier.
	
	The rest of the proof goes analogously to that of Lemma \ref{lemma: necessity, online assignment}.
\end{proof}

\subsection{Condition \eqref{eq:gencond} is sufficient}

Then, we also show that this condition is sufficient. The proof of this result is analogous to that of Lemma \ref{lemma: sufficience, online assignment} and is omitted for the sake of brevity.

\begin{lemma}
	\label{lemma: sufficience, general policies}
	The condition \eqref{eq:onlcond} is sufficient for stability: if \eqref{eq:gencond} is satisfied, then $X$ is positive recurrent. Moreover, if $(G, \mu)$ is stabilisable within this class, then so is it by a stationary size-based policy and by $L_+$-MaxWeight.
\end{lemma}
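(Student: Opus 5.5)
The plan follows the proof of Lemma \ref{lemma: sufficience, online assignment}, with the same Lyapunov function $L_+$. The extra ingredient is the drift contribution of a reassignment step.

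\textbf{Step 1: build the policy.} Given a state $X(t)=c$, first realise all available matchings, so that $\min_{u\in e}c_{u,e}=0$ for every $e$. Let $U$ be the resulting support and take the triple $(\alpha^U,\nu^U,\gamma^U)$ provided by \eqref{eq:gencond}. With probability $\alpha^U$ the arriving item is assigned according to $\nu^U$. Otherwise we reassign according to $\gamma^U$; since $\gamma^U$ is adapted to $U$, every reassignment moves an item that is actually present. The choice depends only on $U$, hence only on $c$, so the policy is stationary and size-based. To make the waiting-then-reassigning step rigorous, I would let held arrivals sit in an auxiliary unassigned pool. Alternatively, and more simply, I would interpret each step as a reassignment followed by assignment of the arriving item, with a mixture in expectation. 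Either way the drift is affine in the probabilities.

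\textbf{Step 2: compute the drift.} For an assignment of a class-$u$ item to $e$, the earlier computation gives
\[
\Delta L_+=\tfrac{|e|}{2}+|e|c_{u,e}-c_e .
\]
Moving one class-$v$ item from $e$ to $e'$ changes $L_+$ by
\[
\Bigl(\tfrac{|e|}{2}-|e|c_{v,e}+c_e\Bigr)+\Bigl(\tfrac{|e'|}{2}+|e'|c_{v,e'}-c_{e'}\Bigr),
\]
because the two hyperedges have disjoint terms in $L_+$. Taking expectations with the probabilities $\gamma$ and collecting terms by $(u,e)$, the linear part becomes
\[
\sum_{(u,e)\in G}c_{u,e}\,|e|\Bigl(\alpha\nu_{u,e}+(1-\alpha)\Gamma_{u,e}-\tfrac{1}{|e|}\bigl[\alpha\nu_e+(1-\alpha)\Gamma_e\bigr]\Bigr).
\]
This uses exactly the rearrangement from the earlier proof, and $\Gamma$ is the signed net inflow. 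The constant part is bounded by $|V|$, since the $\gamma$'s and $\nu$'s sum to at most one.

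\textbf{Step 3: conclude.} Since $c_{u,e}=0$ outside $U$, \eqref{eq:gencond} gives a margin $\varepsilon_U>0$ on $U$. Put $\varepsilon_G=\min_U\varepsilon_U$, a minimum over finitely many supports. Then $\E[\Delta L_+]\le |V|-\varepsilon_G\|c\|_1$, and Foster's criterion gives positive recurrence. This also yields the stationary size-based claim.

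For $L_+$-MaxWeight, note that it picks, at each step, the action among all assignments and reassignments that minimises $L_+$ at the next step. Its drift is therefore no larger than that of any randomisation over those same actions, including the one above. The result follows from Lemma \ref{lemma: necessity, general policies}, which gives \eqref{eq:gencond} from stabilisability.

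\textbf{Main obstacle.} I expect the main difficulty in Step 1: making the $(1-\alpha)$ held-arrival branch a legitimate policy. The held item must not be lost, and the extra items it creates must be accounted for. I would try the auxiliary-pool interpretation first and check that its contribution to $L_+$ is bounded. If that fails, I would instead pass to a time-changed chain where reassignment steps occur at rate $1-\alpha$ between arrivals and rescale the drift.
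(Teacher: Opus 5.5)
Your route is the one the paper has in mind: it omits this proof as analogous to Lemma~\ref{lemma: sufficience, online assignment}. Your Step~2 algebra is also correct. Grouping the reassignment terms by $(u,e)$ does yield $\Gamma_{u,e}-\Gamma_e/|e|$, so the linear part of a single \emph{event} is exactly the quantity bounded by \eqref{eq:gencond}.

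The genuine gap is the one you flag in Step~1, and neither of your first two remedies closes it. As written, Step~3 applies Foster's criterion to a process that is not the matching system. With an auxiliary pool, a fraction $1-\alpha^U$ of arrivals goes somewhere $L_+$ cannot see. If the pool is never emptied, it grows linearly, and positive recurrence of $X$ says nothing about clearing the buffer. If its items are assigned later, those assignments add drift that your computation omits. The problem is the size of the pool, not its contribution to $L_+$. With ``reassign, then always assign the arrival'', the linear part becomes $\sum_{(u,e)} c_{u,e}|e|\bigl[(\nu_{u,e}-\nu_e/|e|)+(1-\alpha)(\Gamma_{u,e}-\Gamma_e/|e|)\bigr]$. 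The $\nu$-term now has weight $1$ instead of $\alpha$, and \eqref{eq:gencond} says nothing about its sign on $U$; if $\alpha^U=0$ it says nothing about $\nu^U$ at all. Matching the weights $\alpha$ and $1-\alpha$ requires on average $(1-\alpha)/\alpha$ reassignments per arrival. That is precisely the time change you keep only as a last resort.

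That time change has to be made the definition of the policy. First, use the strictness of \eqref{eq:gencond} and the finiteness of the set of supports to perturb so that $\alpha^U\ge\alpha_{\min}>0$ for every $U$. Take $\alpha^{\emptyset}=1$, since no reassignment rate is adapted to $\emptyset$. Then let each real time step consist of the arrival, assigned via $\nu^U$, followed by a burst of reassignment events. At each event, use the \emph{current} support $U$: stop with probability $\alpha^U$, otherwise perform one $\gamma^U$-reassignment. The support must be recomputed after every reassignment and every completed matching, because $\gamma^U$ is adapted only to the support at that instant.

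Your Step~2 then gives event drift at most $C-\varepsilon\|c\|_1$, so Foster applies to the event-indexed chain. Since $\alpha^{\emptyset}=1$, the buffer is empty at the end of a real step whenever the event chain hits $\emptyset$. The real clearing time is therefore at most the number of events.

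For $L_+$-MaxWeight, compare per real step rather than per event; an event-level deterministic rule that may keep declining arrivals need not even be a well-defined policy. Define it as choosing, among configurations reachable by assigning the arrival and relabelling, one that minimises $L_+$. It then does at least as well as the whole burst. The burst's drift is at most $C'-\varepsilon\|c\|_1$, because the number of events per step is dominated by a geometric variable of parameter $\alpha_{\min}$ and each event changes $\|c\|_1$ by at most $|V|$.
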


Together, we have
\begin{theorem}
	\label{theorem: necessity and sufficiency of gencond}
	The condition \eqref{eq:gencond} is necessary and sufficient for stability. Moreover, $L_+$-MaxWeight has maximum stability region: if $(G, \mu)$ is stabilisable, then it is stabilised by $L_+$-MaxWeight.
\end{theorem}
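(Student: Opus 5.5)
The theorem combines Lemma \ref{lemma: necessity, general policies}, which gives necessity of \eqref{eq:gencond}, with Lemma \ref{lemma: sufficience, general policies}. The substantive work is the sufficiency half together with the claim that $L_+$-MaxWeight stabilises; I plan to mirror the proof of Lemma \ref{lemma: sufficience, online assignment}.

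\textbf{Setting up the drift.} Fix a state $c\in\mathbb{N}^G$ in which all realisable matchings have been performed, so $\min_{u\in e}c_{u,e}=0$ for every $e$. Let $U$ be its support. By \eqref{eq:gencond}, take $(\alpha^U,\nu^U,\gamma^U)$ and use the randomised stationary size-based policy they define. I will compute $\E[\Delta L_+]$ in two parts.

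\emph{Arrival branch} (probability $\alpha^U$). This is exactly the earlier computation and gives
\[
\sum_{(u,e)} \nu_{u,e}\left(\tfrac{|e|}{2}+|e|c_{u,e}-c_e\right).
\]

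\emph{Reassignment branch} (probability $1-\alpha^U$). Moving one class-$v$ item from $e$ to $e'$ decreases $c_{v,e}$ by one, which changes $L_+$ by $\tfrac{|e|}{2}-(|e|c_{v,e}-c_e)$. It also increases $c_{v,e'}$ by one, which changes $L_+$ by $\tfrac{|e'|}{2}+|e'|c_{v,e'}-c_{e'}$.

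\textbf{Collecting terms.} Summing over $\gamma$ and regrouping exactly as in the earlier rearrangement, the linear part becomes
\[
\sum_{(u,e)\in G} c_{u,e}\,|e|\Big(\alpha\nu_{u,e}+(1-\alpha)\Gamma_{u,e}-\tfrac{1}{|e|}\big[\alpha\nu_e+(1-\alpha)\Gamma_e\big]\Big).
\]
The constants are bounded by $|V|$, since the rates sum to at most one on each branch. Only $(u,e)\in U$ contribute to the linear part, because $c_{u,e}=0$ off $U$. Condition \eqref{eq:gencond} then gives a margin $\varepsilon_U>0$ for each support $U$. Setting $\varepsilon_G=\min_U\varepsilon_U$, we get
\[
\E[\Delta L_+]\le |V|-\varepsilon_G\|c\|_1,
\]
and Foster's criterion yields positive recurrence.

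\textbf{Main obstacle: the reassignment branch.} There are two issues here.

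First, a reassignment may create a new realisable matching in $e'$. Since $L_+$ is invariant under removing complete matchings, this is harmless: realising it does not change $L_+$.

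Second, the ``$-1$'' on $c_{v,e}$ must be legitimate. This requires $(v,e)\in U$, which is exactly the adaptedness condition in Definition \ref{definition: reassignment rate}.

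A further subtlety is the sign convention in $\Gamma$. I must check that the drift coefficient of $c_{u,e}$ is precisely $|e|\Gamma_{u,e}-\Gamma_e$, up to the constant terms. I will verify this carefully: the $-c_{v,e}$ change contributes $-(|e|c_{v,e}-c_e)$, so the linear part of the reassignment branch is $\sum_{(u,e)} c_{u,e}\big(|e|\Gamma_{u,e}-\Gamma_e\big)$. This matches the arrival branch with $\nu$ replaced by $\Gamma$.

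\textbf{Deterministic policy.} Finally, $L_+$-MaxWeight chooses at each step the action, among accepting or assigning and all admissible reassignments and matchings, that minimises the conditional expected value of $L_+$ at the next step. Its drift is therefore at most that of the randomised policy, so it also satisfies Foster's criterion. Combined with necessity, this gives both statements of the theorem.
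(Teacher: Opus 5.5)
Your proposal is correct and follows essentially the paper's route. Necessity is Lemma~\ref{lemma: necessity, general policies}; sufficiency is the $L_+$ Foster-drift argument of Lemma~\ref{lemma: sufficience, online assignment} extended by a reassignment branch; and the $L_+$-MaxWeight claim holds because minimising the one-step drift pointwise does at least as well as the randomised policy, as in the remark after that lemma. The paper omits the reassignment computation as ``analogous'', and your derivation of the linear coefficient $|e|\Gamma_{u,e}-\Gamma_e$, using adaptedness for the $-1$ and the invariance of $L_+$ under completed matchings, fills it in correctly.
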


\begin{remark}
	It should be noted that the two $L_+$-MaxWeight policies in Lemma \ref{lemma: sufficience, online assignment} and Lemma \ref{lemma: sufficience, general policies} are not the same. In particular, the former is within the class of online assignment policies, meaning it is not allowed to reassign items, whereas the latter is. A priori, the latter may seem more powerful than the former in terms of stability region, although we shall see that it is not the case.
\end{remark}

	\section{Equivalence of \eqref{eq:onlcond} and \eqref{eq:gencond}, third stability criterion}
\label{Equivalence}

In Section \ref{Criterion for general policies}, we have shown that \eqref{eq:gencond} is necessary and sufficient for stability, and moreover, $L_+$-MaxWeight has maximum stability region. That still begs two questions:
\begin{itemize}
	\item What is the relationship between \eqref{eq:onlcond} and \eqref{eq:gencond}? As the former is a special case of the latter, the latter is weaker than the former, but is it strictly weaker? Or, in other words, is there an online assignment policy with maximum stability region?
	
	Note that $L_+$-MaxWeight is \emph{not necessarily} online assignment. In particular, it is permitted to reassignment items to minimise $L_+$.
	
	\item Is there a ``simpler'' criterion for stability?
	
	``Simpler'' is up to interpretation. It can mean a criterion computationally verifiable in polynomial time, or a criterion resembling that of Mairesse and Moyal \cite[Theorem 1]{Mairesse2016} or that of Comte et al. \cite[Proposition 7 and 8]{Comte2021}.
\end{itemize}

In this Section, we provide answers to both questions above. We show that \eqref{eq:onlcond} and \eqref{eq:gencond} are in fact equivalent. Moreover, we devise another stability criterion equivalent to both \eqref{eq:onlcond} and \eqref{eq:gencond}, and which is a generalisation of that of Comte et al.

First, we recall the stability criterion by Comte et al. \cite[Proposition 7 and 8]{Comte2021}.

\begin{proposition}
	If $G$ is a graph, then $(G, \mu)$ is stabilisable if and only if the equation $A \lambda = \mu$ has a positive solution $\lambda \in \mathbb{R}_{> 0}^E$ and $A$ is surjective.
\end{proposition}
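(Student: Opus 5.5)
We will not reprove the whole Mairesse–Moyal theory. Instead we take as given their characterisation \cite[Theorem 1]{Mairesse2016}: for a graph $G$, $(G, \mu)$ is stabilisable if and only if no connected component of $G$ is bipartite and \texttt{NCOND} holds, i.e. $\mu(I) < \mu(N(I))$ for every non-empty independent set $I$. The plan is to match the two halves of this characterisation with the two conditions of the proposition.

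\emph{Surjectivity of $A$.} We use the classical rank formula for the $V \times E$ incidence matrix of a graph: its rank equals $|V|$ minus the number of bipartite connected components. Hence $A$ is surjective exactly when no component is bipartite. We will recall the standard argument: on a bipartite component the signed vector $+1$ on one side and $-1$ on the other lies in the left kernel. On a non-bipartite component, a vector $y$ in the left kernel satisfies $y_u = -y_v$ along every edge, and an odd cycle then forces $y = 0$ there.

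\emph{Positive solution $\Rightarrow$ \texttt{NCOND}, under surjectivity.} Let $\lambda > 0$ with $A\lambda = \mu$, and let $I$ be independent. Every edge meeting $I$ has its other endpoint in $N(I)$, so
\[
\mu(I) = \sum_{e \cap I \neq \emptyset} \lambda_e \leq \sum_{e \cap N(I) \neq \emptyset} \lambda_e \leq \mu(N(I)).
\]
Suppose equality held. Then, since $\lambda > 0$, every edge at a vertex of $N(I)$ goes into $I$. The set $I \cup N(I)$ would then be a union of connected components, and these would be bipartite with sides $I$ and $N(I)$ ($N(I) \cap I = \emptyset$ by independence). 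This contradicts surjectivity, so the inequality is strict.

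\emph{\texttt{NCOND} $\Rightarrow$ positive solution.} This is the main step, and we split it in three.
\begin{enumerate}
\item \emph{Perturb $\mu$.} Set $\mu' = \mu - \varepsilon A\mathbf{1}$. There are finitely many independent sets and each inequality in \texttt{NCOND} is strict, so for small $\varepsilon > 0$ the measure $\mu'$ is still positive and still satisfies \texttt{NCOND}.
\item \emph{Find a nonnegative solution for $\mu'$.} We view the problem as a transportation problem on the bipartite double cover of $G$: we seek $x_{uv} \geq 0$ on ordered adjacent pairs with row sums $\mu'_u$ and column sums $\mu'_v$. By Hall's theorem in its supply–demand (Gale) form, this is feasible iff $\mu'(S) \leq \mu'(N(S))$ for every $S \subseteq V$.
\item \emph{Reduce Hall's condition to independent sets.} Given $S$, put $I = S \setminus N(S)$, which is independent. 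Its neighbourhood $N(I)$ avoids $S$, so $N(I) \subseteq N(S) \setminus S$. Also $S \cap N(S) \subseteq N(S)$, and these two pieces are disjoint. Hence
\[
\mu'(S) = \mu'(I) + \mu'(S \cap N(S)) \leq \mu'(N(I)) + \mu'(S \cap N(S)) \leq \mu'(N(S)).
\]
\end{enumerate}
We then symmetrise, setting $\lambda'_{\{u,v\}} = (x_{uv} + x_{vu})/2$. This gives $\lambda' \geq 0$ with $A\lambda' = \mu'$, so $\lambda = \lambda' + \varepsilon \mathbf{1} > 0$ solves $A\lambda = \mu$.

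The main obstacle is this last direction: handling non-bipartiteness when passing to a Hall-type condition, and making sure strictness survives the perturbation. The double-cover-plus-symmetrisation trick is the step I would check most carefully. Combining the three parts with \cite[Theorem 1]{Mairesse2016} yields the proposition.
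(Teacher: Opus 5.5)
Your proof is correct, and it takes a genuinely different route from the paper. The paper does not prove this proposition itself; it quotes it from Comte et al. Its hypergraph results do, however, recover it as the case $|e|=2$, without ever invoking Mairesse--Moyal:
stabilisability gives \eqref{eq:gencond} via the auxiliary process $Y$, a separating hyperplane and a submartingale argument; \eqref{eq:gencond} gives \eqref{eq:inccond} via openness of the set of stabilisable rates plus the conservation equation; \eqref{eq:inccond} gives \eqref{eq:onlcond} via the explicit rates $\nu_{u,e} = \lambda_e - \zeta_{u,e} + \xi_e$; and \eqref{eq:onlcond} gives stability via the drift of $L_+$.

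You instead put all the stochastic content into the Mairesse--Moyal black box. You then prove that \texttt{NCOND} is equivalent to \eqref{eq:inccond} by purely deterministic means:
\begin{itemize}
\item the rank formula for the unsigned incidence matrix;
\item the equality case, showing that a tight independent set spans bipartite components;
\item a Gale/Hall transportation problem on the bipartite double cover, with Hall's condition reduced to the independent set $S \setminus N(S)$ and the flow symmetrised via $\lambda'_{\{u,v\}} = (x_{uv}+x_{vu})/2$.
\end{itemize}
Your perturbation $\mu' = \mu - \varepsilon A\mathbf{1}$, followed by $\lambda = \lambda' + \varepsilon\mathbf{1}$, is exactly the paper's device for upgrading a non-negative solution to a positive one. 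Your Hall step is close in spirit to the max-flow/min-cut argument in the paper's section comparing \texttt{NCOND} with \eqref{eq:onlcond}.

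What each route buys: yours is shorter and shows precisely where non-bipartiteness enters. It is, however, inherently graph-specific, since neither the reduction to independent sets nor the Mairesse--Moyal theorem (maximal stability of a greedy policy) has a direct hypergraph counterpart. That is exactly why the paper builds the assignment framework. The paper's route is self-contained on the probabilistic side, works verbatim for hypergraphs, and exhibits an explicit maximally stable policy.

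One minor point: if you cite Mairesse--Moyal in its connected-graph form, add the routine remark that the model decouples over components and that \texttt{NCOND} holds on $G$ if and only if it holds on each component. Note also that \texttt{NCOND} by itself already rules out bipartite components, isolated vertices included. So the non-bipartiteness clause in your black box is redundant, though harmless.
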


Similarly, we show that the following condition to be necessary and sufficient.
\begin{equation}
	\label{eq:inccond}
	\textrm{\parbox{.8\textwidth}{Let $A$ be the incidence matrix of the hypergraph $G$, then $A \lambda = \mu$ has a positive solution $\lambda \in \mathbb{R}_{> 0}^E$ and $A$ is surjective.
	}}
\end{equation}

First, we show that \eqref{eq:gencond} implies \eqref{eq:inccond}, which implies necessity.

\begin{lemma}
	\label{lemma: gencond implies inccond}
	Condition \eqref{eq:gencond} implies Condition \eqref{eq:inccond}.
\end{lemma}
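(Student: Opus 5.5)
We go through stability rather than manipulating \eqref{eq:gencond} directly. By Theorem \ref{theorem: necessity and sufficiency of gencond}, \eqref{eq:gencond} implies that $(G, \mu)$ is stabilisable. So it suffices to show that if \eqref{eq:inccond} fails, then no policy can stabilise $(G, \mu)$. We derive this from a linear functional $y$ of the buffer sizes $Z$ that matchings can never increase and arrivals do not increase on average.

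\textbf{Step 1 (duality).} Assume \eqref{eq:inccond} fails. We claim there is a nonzero $y \in \mathbb{R}^V$ such that $y_e := \sum_{v \in e} y_v \geq 0$ for all $e \in E$ and $\langle y, \mu \rangle \leq 0$. There are two cases.
\begin{itemize}
\item If $A$ is not surjective, take $y \neq 0$ with $y^\top A = 0$, and flip its sign if necessary so that $\langle y, \mu\rangle \leq 0$.
\item If $A$ is surjective but $A\lambda = \mu$ has no solution $\lambda > 0$, apply Motzkin's (or Stiemke's) transposition theorem. It gives $y$ with $y^\top A \geq 0$ and $\langle y, \mu\rangle \leq 0$, where the pair $(y^\top A, \langle y,\mu\rangle)$ is not identically zero; in particular $y \neq 0$.
\end{itemize}
Next we show some $y_v$ is strictly positive. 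We may assume every vertex lies in some hyperedge, since otherwise the model is trivially unstabilisable. If all $y_v \leq 0$, then $y_e \geq 0$ forces $y_v = 0$ on every hyperedge, hence $y = 0$, a contradiction. Fix $v_0$ with $y_{v_0} > 0$.

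\textbf{Step 2 (a supermartingale).} Fix any policy $\Phi$ and set $S(t) = \langle y, Z(t)\rangle$. An arrival of class $v$ changes $S$ by $y_v$, and a type-$e$ matching changes $S$ by $-y_e \leq 0$. Hence
\[
S(t) = S(0) + \sum_{s=1}^{t} y_{A_s} - \sum_{e \in E} y_e\, N_e(t),
\]
where $N_e(t)$ is the number of type-$e$ matchings made by time $t$. Start from the initial buffer consisting of a single class-$v_0$ item, so that $S(0) = y_{v_0} > 0$.

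\textbf{Step 3 (optional stopping).} Suppose $\Phi$ stabilises the model, so the clearing time $\tau = \tau_0$ satisfies $\E[\tau] < \infty$. At time $\tau$ the buffer is empty, so $S(\tau) = 0$. The increments $y_{A_s}$ are i.i.d. and bounded, with mean $\langle y, \mu\rangle$, and $\tau$ is a stopping time for the history. Wald's identity therefore gives
\[
\E\Bigl[\sum_{s \leq \tau} y_{A_s}\Bigr] = \langle y, \mu\rangle\, \E[\tau] \leq 0 .
\]
Combining this with $y_e N_e(\tau) \geq 0$, we get
\[
0 = \E[S(\tau)] \leq y_{v_0} + \langle y,\mu\rangle \E[\tau] \leq y_{v_0},
\]
and in fact $0 \leq y_{v_0} + \langle y,\mu\rangle \E[\tau] - \E\bigl[\sum_e y_e N_e(\tau)\bigr]$ must hold with equality. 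This only gives $0 \le y_{v_0}$, so the naive bound points the wrong way.

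\textbf{The main obstacle.} The inequalities have to be arranged to give a strict contradiction. The correct formulation is
\[
y_{v_0} = \E[S(0)] = -\langle y,\mu\rangle \E[\tau] + \E\Bigl[\sum_e y_e N_e(\tau)\Bigr],
\]
and the right-hand side need not vanish. So bare sign information on $y$ is not enough. We would first try to choose the dual certificate more carefully so that matchings cannot offset arrivals. Concretely, using the dichotomy from Step 1, we would aim for a $y$ with $y^\top A = 0$ when surjectivity fails, so that $S$ is unaffected by matchings. Then $S$ is a genuine random walk with mean $\langle y,\mu\rangle \leq 0$, started at $S(0) = y_{v_0} > 0$, that must hit exactly $0$. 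If $\langle y,\mu\rangle = 0$, Wald's identity gives $y_{v_0} = 0$, a contradiction. If $\langle y,\mu\rangle < 0$, we flip the sign of $y$ and start instead with a class $v_1$ having $y_{v_1} < 0$; the drift then becomes nonnegative, and the same identity again forces $S(0) = 0$. For the remaining case (no positive solution), we would reduce to the support $U$ of hyperedges $e$ with $y_e > 0$. These hyperedges must carry zero matching rate, so matching along them cannot persist, and restricting to the sub-hypergraph on which $y^\top A = 0$ would let us reapply the previous argument. Making this reduction rigorous is the step we expect to be hardest.
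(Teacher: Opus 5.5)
Your proposal is not a complete proof. Step 3, as you observe yourself, yields only $0 \le y_{v_0}$. Your repair covers only the case where $A$ is not surjective: there $y^\top A = 0$, $S$ is a pure random walk, and Wald's identity does give a contradiction. For the case where $A$ is surjective but $A\lambda=\mu$ has no positive solution, you give only a sketch, and it does not work as stated. Knowing that hyperedges with $y_e>0$ have zero long-run matching rate does not prevent such matchings from occurring before the clearing time $\tau$. That claim also presupposes the existence of matching rates and the conservation equation. So you cannot discard those hyperedges and restrict to the sub-hypergraph where $y_e=0$, and the term $\E\bigl[\sum_e y_e N_e(\tau)\bigr]$ remains uncontrolled.

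The missing idea is the sign of the starting state, and it makes your case split unnecessary. In every case of Step 1 you have $y\neq 0$, $y_e\ge 0$ for all $e$, and $\langle y,\mu\rangle\le 0$. Since $\mu_v>0$ for every $v$, $y$ must have a strictly negative coordinate $y_{v_1}<0$; indeed, if $y\ge 0$ and $y\neq 0$ then $\langle y,\mu\rangle>0$. Start from a single class-$v_1$ item instead of a class-$v_0$ item. Your pathwise identity at $\tau$ then reads $\sum_{s\le\tau} y_{A_s} = -y_{v_1}+\sum_e y_e N_e(\tau)\ge -y_{v_1}>0$. Wald's identity gives $\E\bigl[\sum_{s\le\tau} y_{A_s}\bigr]=\langle y,\mu\rangle\,\E[\tau]\le 0$, a contradiction. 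The underlying point is that $S$ is a supermartingale, which can only rule out starting values $S(0)<0$, never $S(0)>0$.

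With that change, your argument is a valid and genuinely different route from the paper's. The paper shows that the set of $\mu$ satisfying \eqref{eq:gencond} is open. It then applies the conservation equation $A\lambda=\mu$, $\lambda\ge 0$, at all nearby $\eta$ to get surjectivity, and at $\mu-\varepsilon A\mathbbm{1}$ to get a positive solution. Your route replaces openness and the informally justified existence of matching rates by Farkas--Stiemke duality plus Wald's identity, which works directly for arbitrary non-Markovian policies. Both routes rely on the sufficiency half of Theorem \ref{theorem: necessity and sufficiency of gencond}.
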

\begin{proof}
	First, to show the surjectivity of $A$, we note that for a given hypergraph $G$, the set of arrival rates $\mu$ satisfying \eqref{eq:gencond} is open.
	
	Indeed, fix $\mu$ such that $(G, \mu)$ is stabilisable. For each support $U$, let $\nu^U$, $\gamma^U$, and $0 \leq \alpha^U \leq 1$ be an assignment rate, a reassignment rate, and a parameter satisfying \eqref{eq:gencond}. Let
	\[
		\varepsilon_U = \min_{(v, e) \in U} \frac{1}{|e|} \left[\alpha^U \nu^U_e + \left(1 - \alpha^U\right) \Gamma_e\right] - \left[\alpha^U \nu^U_{v, e} + \left(1 - \alpha^U\right) \Gamma_{u, e}\right] > 0
	\]
	and let $\varepsilon = \min_U \varepsilon_U$.
	
	Moreover, for each support $U$ and each vertex $u$, choose an arbitrary hyperedge $e^U_u$ containing $u$ with $\nu^U_{u, e} > 0$, which must exist as $\sum_{e \ni u} \nu_{u, e} = \mu_u > 0$. Denote $\varepsilon^U_\nu = \min_{u} \nu_{u, e^U_u}$, and $\varepsilon_\nu = \min_U \varepsilon^U_\nu$.
	
	Now consider a given $\eta \in \mathbb{R}_{\geq 0}^V$ such that $\|\eta - \mu\|_1 = \max_{u \in V} |\eta_v - \mu_v| \leq \min \left(\varepsilon_\nu, \frac{\varepsilon_G}{3}\right)$. For a given support $U$, let $\nu = \nu^U$. For each $(u, e) \in G$, let
	\[
		\nu'_{u, e} = 
		\begin{cases}
			\nu_{u, e} + \eta_ u- \mu_u & \text{ if } e = e_u \\
			\nu_{u, e} & \text{ otherwise}
		\end{cases},
	\]
	we have that $\nu' \in \mathbb{R}_{\geq 0}^G$ is an assignment rate, as for each $u \in V$, we have $\sum_{e \ni u} \nu'_{u, e} = \eta_u - \mu_u + \sum_{e \ni u} \nu_{u, e} = \eta_u$. 
	
	Finally, for all $(u, e) \in G$, we have $|\nu_{u, e} - \nu'_{u, e}| \leq \|\eta - \mu\|_1 \leq \frac{\varepsilon_G}{3}$, implying
	\[
		|\nu'_e - \nu_e| = \left| \sum_{u \in e} \nu'_{u, e} - \nu_{u, e}\right| \leq \sum_{u \in e} |\nu'_{u, e} - \nu_{u, e}| \leq |e| \frac{\varepsilon_G}{3}.
	\]
	In particular, this implies $\frac{\nu_e}{|e|} \leq \frac{\nu'_e}{|e|} + \frac{\varepsilon_G}{3}$, thus for each $(u, e) \in U$, we have
	\begin{align*}
		\alpha \nu'_{v, e} + (1 - \alpha) \Gamma_{v, e}
		& \leq \alpha\frac{\varepsilon_G}{3} + \alpha \nu_{v, e} + (1 - \alpha) \Gamma_{u, e} \\
		& \leq \alpha\frac{\varepsilon_G}{3} - \varepsilon_G + \frac{1}{|e|} \left[ \alpha \nu_e + (1 - \alpha) \Gamma_e\right] \\
		& \leq 2\alpha\frac{\varepsilon_G}{3} - \varepsilon_G + \frac{1}{|e|} \left[ \alpha \nu'_e + (1 - \alpha) \Gamma_e\right] \\
		& \leq -\alpha\frac{\varepsilon_G}{3} + \frac{1}{|e|} \left[ \alpha \nu'_e + (1 - \alpha) \Gamma_e\right] \\
	\end{align*}
	Therefore, $\nu'$, $\gamma$, and $\alpha$ satisfy \eqref{eq:gencond} for $U$ and $\eta$. Since this argument holds for all $U$, $(G, \eta)$ is stabilisable.
	
	Going back to conservation equation, since $(G, \mu)$ and $(G, \eta)$ are both stabilisable, there exist vectors $\lambda_\mu$ and $\lambda_\eta$ such that $A \lambda_\mu = \mu$ and $A \lambda_\eta = \eta$. This implies 
	\[
		A (\lambda_\eta - \lambda_\mu) = \eta - \mu
	\]
	but since this is true for all $\|\eta - \mu\|_1 \leq \min \left(\varepsilon_\nu, \frac{\varepsilon_G}{3}\right)$, we conclude that $A$ is surjective.
	
	Next, $A \lambda = \mu$ having a \textit{non-negative} solution $\lambda_0 \in \mathbb{R}^E_{\geq 0}$ is simply the conservation equation which holds when $(G, \mu)$ is stabilisable. However, to construct a \textit{positive} solution $\lambda \in \mathbb{R}^E_{> 0}$, we need an extra argument.
	
	Let $0 < \varepsilon < \frac{1}{|G|} \min \left(\varepsilon_\nu, \frac{\varepsilon_G}{3}\right)$, and $\mu_\varepsilon \in \mathbb{R}^V$ be defined by $(\mu_\varepsilon)_u = \mu_u - \sum_{e \ni u} \varepsilon$. We have that $\| \mu_\varepsilon - \mu\|_1 = \varepsilon |G| < \min \left(\varepsilon_\nu, \frac{\varepsilon_G}{3}\right)$ by construction, so by the argument above, we have $(G, \mu_\varepsilon)$ is stabilisable. Thus, it satisfies the conservation equation, meaning there exists $\lambda_\varepsilon \in \mathbb{R}^E_{\geq 0}$ such that $A \lambda_\varepsilon = \mu_\varepsilon$.
	
	Let $\lambda = \lambda_\varepsilon + \varepsilon \mathbbm{1}$, then by construction, we have $A\lambda = \mu$, and moreover, for all $e \in E$, we also have $\lambda_e = \lambda_\varepsilon + \varepsilon \geq \varepsilon > 0$, as desired.
\end{proof}

Then, we show that \eqref{eq:inccond} implies \eqref{eq:onlcond}, which implies sufficiency.

\begin{lemma}
	Condition \eqref{eq:inccond} implies Condition \eqref{eq:onlcond}.
\end{lemma}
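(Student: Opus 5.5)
The plan is to build each required assignment rate as a small perturbation of the ``balanced'' assignment rate coming from the positive solution $\lambda$. Let $\lambda \in \mathbb{R}^E_{>0}$ satisfy $A\lambda = \mu$, and set $\nu^0_{v,e} = \lambda_e$ for all $(v,e) \in G$. This is an assignment rate. Its entries are positive, and for each $v$,
\[
\sum_{e \ni v} \nu^0_{v,e} = (A\lambda)_v = \mu_v .
\]
It is also exactly balanced inside every hyperedge: $\nu^0_e = |e|\lambda_e$, so $\nu^0_{v,e} - \nu^0_e/|e| = 0$ for every $(v,e) \in G$. It therefore misses \eqref{eq:onlcond} only by a non-strict inequality. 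Fix a support $U$. We look for a direction $d \in \mathbb{R}^G$ with two properties:
\[
\text{(i)}\ \ \sum_{e \ni v} d_{v,e} = 0 \ \text{ for all } v \in V,
\qquad
\text{(ii)}\ \ d_{v,e} < \frac{d_e}{|e|} \ \text{ for all } (v,e) \in U,
\]
where $d_e = \sum_{u \in e} d_{u,e}$. We then take $\nu^U = \nu^0 + t d$ with $t > 0$ small.

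The key observation concerns condition (ii). It compares $d_{v,e}$ with the mean of $d$ over the slice $\{(u,e) : u \in e\}$. Hence it is unchanged if we add a constant $c_e$ to every entry of that slice. So write $d_{v,e} = b_{v,e} + c_e$. Choose $b_{v,e} = -1$ if $(v,e) \in U$ and $b_{v,e} = 0$ otherwise. Since $U$ is a support, $k_e := |U_e| < |e|$. Then the mean of $b$ over $e$ is $-k_e/|e| > -1$, so $b$ (and hence $d$, for any $c$) satisfies (ii). The remaining freedom $c \in \mathbb{R}^E$ is used to enforce (i), which reads
\[
(Ac)_v = -\sum_{e \ni v} b_{v,e} = |\{e : (v,e) \in U\}| \quad \text{for all } v .
\]
This is precisely where the surjectivity of $A$ in \eqref{eq:inccond} is used: some $c \in \mathbb{R}^E$ solves this system.

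Finally, take $0 < t < \min_e \lambda_e / \max_{(v,e)} |d_{v,e}|$. This bound is well defined because $\lambda > 0$, and if $d = 0$ any $t > 0$ works. With such $t$, $\nu^U = \nu^0 + t d$ is entrywise nonnegative. By (i) its vertex sums are still $\mu_v$, so it is an assignment rate. By linearity and balancedness of $\nu^0$, for $(v,e) \in U$,
\[
\nu^U_{v,e} - \frac{\nu^U_e}{|e|} = t\left(d_{v,e} - \frac{d_e}{|e|}\right) < 0 .
\]
Doing this for every support $U$ gives \eqref{eq:onlcond}.

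I do not expect a serious obstacle. The only delicate point is to notice that the per-hyperedge shift $c_e$ is invisible to the strict inequalities but not to the vertex-sum constraints. This decouples the two requirements: the support property handles (ii), and the surjectivity of $A$ handles (i). The positivity of $\lambda$, rather than mere nonnegativity, is what makes the small perturbation stay feasible.
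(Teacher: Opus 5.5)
Your proof is correct and is essentially the paper's argument: your $t$, $-tb$ and $tc$ play the roles of the paper's $\varepsilon$, $\zeta$ and $\xi$ in $\nu_{u,e} = \lambda_e - \zeta_{u,e} + \xi_e$, with surjectivity of $A$ restoring the vertex sums and the support property giving the strict inequality. Your choice to fix $c$ first and then scale by $t$ handles nonnegativity more cleanly than the paper, which appeals to ``continuity of $A$'' and thus leaves implicit that the preimage $\xi$ can be chosen to shrink with $\varepsilon$.
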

\begin{proof}
	Our starting point is the solution $\lambda \in \mathbb{R}_{> 0}^E$ to the equation $A \lambda = \mu$.
	
	Let $U$ be a support. For some $\varepsilon > 0$ to be defined later, let 
	\[
	\zeta_{u, e} = \begin{cases}
		\varepsilon & \text{ if } (u, e) \in U \\
		0 & \text{ otherwise}
	\end{cases}.
	\]
	This induces a vector $\eta \in \mathbb{R}^V$ by $\eta_u = \sum_{e \ni u} \zeta_{u, e}$. Since $A$ is surjective, there exists $\xi \in \mathbb{R}^E$ such that $A \xi = \eta$, or in other words, for all $u \in V$, $\eta_u = \sum_{e \ni u} \xi_e$.
	
	Now for $(u, e) \in G$, let $\nu_{u, e} = \lambda_e - \zeta_{u, e} + \xi_e$, we will show that for suitably chosen $\varepsilon$, $\nu$ is an assignment rate satisfying \eqref{eq:onlcond} for $U$.
	\begin{itemize}
		\item By construction, for all $u \in V$, we have
		\[
		\begin{aligned}
			\sum_{e \ni u} \nu_{u, e}
			& = \sum_{e \ni u} \lambda_e - \zeta_{u, e} + \xi_e = \sum_{e \ni u} \lambda_e -  \sum_{e \ni u}\zeta_{u, e} +  \sum_{e \ni u}\xi_e\\
			& = \mu_u - \eta_u + \eta_u = \mu_u;
		\end{aligned} 
		\]
		\item Moreover, we have $-\zeta_{u, e} < -\frac{1}{|e|} \zeta_e$ where $\zeta_e = \sum_{v \in e} \zeta_{v, e}$. This is because of the definition of support, where for all $e$, there exists $v' \in e$ such that $(v', e) \not\in U$, so $\zeta_e \leq (|e|-1) \varepsilon$.
		
		In particular, this implies
		\begin{align*}
			\nu_{u, e} 
			& = \lambda_e - \zeta_{u, e} + \xi_e = -\zeta_{u, e} + \frac{1}{|e|} \sum_{v \in e} \lambda_e + \xi_e \\
			& \leq -\frac{1}{|e|} \sum_{v \in e} \zeta_{v, e} + \frac{1}{|e|} \sum_{v \in e} \lambda_e + \xi_e = \frac{1}{e} \sum_{u \in e} (\lambda_e - \zeta_{v, e} + \xi_e) \\
			& = \frac{1}{|e|} \sum_{v \in e} \nu_{v, e} = \frac{\nu_e}{|e|};
		\end{align*}
		\item Finally, note that $\|\eta\|_1 = |U|\varepsilon$. Since $A$ is a linear map, it is continuous, hence, for $\varepsilon > 0$ small enough, we have that $\|\xi\|_\infty \leq -\varepsilon + \min_{(u, e) \in G} \lambda_e$. This implies that for all $(u, e) \in G$.
		\[
			\nu_{u, e} = \lambda_e - \zeta_{u, e} + \xi_e \geq \lambda_e - \varepsilon - \|\xi\|_\infty \geq 0.
		\]
	\end{itemize}
	These three properties means that $\nu$ is an assignment rate satisfying \eqref{eq:onlcond} for $U$. Since this argument holds for all $U$, we have that \eqref{eq:onlcond} is satisfied, as desired.
\end{proof}

Combining the two lemmata with the fact that \eqref{eq:onlcond} is a special case of \eqref{eq:gencond}, we get that 
\begin{theorem}
	\label{theorem: necessity and sufficiency of inccond}
	Condition \eqref{eq:onlcond}, \eqref{eq:gencond}, and \eqref{eq:inccond} are equivalent.
\end{theorem}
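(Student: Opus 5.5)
The plan is to close a cycle of implications: \eqref{eq:onlcond} $\Rightarrow$ \eqref{eq:gencond} $\Rightarrow$ \eqref{eq:inccond} $\Rightarrow$ \eqref{eq:onlcond}. The last two arrows are exactly Lemma \ref{lemma: gencond implies inccond} and the lemma immediately preceding the theorem. So the only new ingredient is the first arrow, and after it everything follows by chaining.

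For \eqref{eq:onlcond} $\Rightarrow$ \eqref{eq:gencond}, fix a support $U$ and take the assignment rate $\nu^U$ supplied by \eqref{eq:onlcond}. For the reassignment part, we need a rate $\gamma^U$ adapted to $U$; the natural choice is one that does nothing, which the normalisation $\sum_{e\ni v}\sum_{e'\ni v}\gamma_{v,e,e'}=1$ seems to permit via loops $e'=e$. Indeed, $\Gamma_{u,e}$ only involves $e'\neq e$, so loops contribute nothing. The parameter $\alpha^U$ is then chosen so that only the arrival term $\alpha^U\nu^U$ is active. With this choice the inequality of \eqref{eq:gencond} reduces exactly to $\nu^U_{v,e}<\nu^U_e/|e|$ for all $(v,e)\in U$, which is \eqref{eq:onlcond}. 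In effect this is the observation, stated in the paper, that online assignment policies are the special case of general policies with no reassignment.

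The main obstacle is purely definitional. The paper's remark calls $\alpha=0$ the online case, whereas in the formula of \eqref{eq:gencond} the arrival term is weighted by $\alpha$, so "no reassignment" corresponds to $\alpha^U=1$. I would follow the displayed condition and take $\alpha^U=1$, which makes the choice of $\gamma^U$ irrelevant. Even if $\alpha^U=1$ were excluded, taking $\alpha^U$ close to $1$ works: the inequality is strict over finitely many $(v,e)\in U$, and the $\Gamma$ terms are bounded, so continuity preserves it.

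Finally, I would combine the three implications. By Theorems \ref{theorem: necessity and sufficiency of onlcond} and \ref{theorem: necessity and sufficiency of gencond}, this also shows that the stability region of online assignment policies is maximal.
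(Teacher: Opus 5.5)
Your proposal is correct and follows the paper's proof. The paper also closes the cycle \eqref{eq:onlcond} $\Rightarrow$ \eqref{eq:gencond} $\Rightarrow$ \eqref{eq:inccond} $\Rightarrow$ \eqref{eq:onlcond}, citing the two lemmata and the fact that \eqref{eq:onlcond} is the no-reassignment special case of \eqref{eq:gencond}. You make that special case explicit, and you are right to use $\alpha^U=1$ rather than the $\alpha=0$ in the paper's remark.

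One minor caveat concerns the loop trick. It only produces a reassignment rate adapted to $U$ when every vertex $v$ has some $(v,e)\in U$. For $U=\emptyset$, for instance, the definition forces $\gamma^U\equiv 0$, which violates the normalisation. This is a defect in the paper's definition, which its own ``special case'' remark also glosses over. It is repaired by relaxing the normalisation (e.g.\ allowing $\gamma^U=0$) when $\alpha^U=1$.
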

Consequently, the stationary size-based randomised online assignment policy constructed in the proof of Lemma \ref{lemma: sufficience, online assignment} in fact admits maximum stability region.
	\section{\texttt{NCOND} versus Condition \eqref{eq:onlcond}}
\label{Graphs vs hypergraphs}

In Section \ref{Criterion for online assignment policies}, we have seen the intuition behind \texttt{NCOND} in graph case, which leads to our conjectured Condition \eqref{eq:onlcond}. Whilst the intuition does not trivially get carried over to hypergraph case, it does lead to a correct guess. Moreover, as we seen in Section \ref{Equivalence}, Condition \eqref{eq:onlcond} is indeed a stability criterion.

Given the similarity between \texttt{NCOND} and Condition \eqref{eq:onlcond}, as graph case is a special case of hypergraph case, one may pose a natural question: is \texttt{NCOND} is also a special case of Condition \eqref{eq:onlcond}? Of course, they are equivalent since they are both necessary and sufficient for stability, but is there a way to directly derive \texttt{NCOND} from \eqref{eq:onlcond}?

In this Section, we show that the answer is yes, and hope that the derivation clears up the relation between the two conditions, as well as the rather informal intuition presented in Section \ref{Criterion for online assignment policies}.

\begin{lemma}
	If $G$ is a graph, then \texttt{NCOND} and Condition \eqref{eq:onlcond} are equivalent.
\end{lemma}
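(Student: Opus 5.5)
The plan is to exploit the special structure of supports when every hyperedge has exactly two vertices. For $e=\{u,v\}$ we have $\nu_e=\nu_{u,e}+\nu_{v,e}$, so the inequality $\nu_{u,e}<\nu_e/|e|$ in \eqref{eq:onlcond} reads simply $\nu_{u,e}<\nu_{v,e}$. A support $U$ selects, for each edge, at most one of its two endpoints. So \eqref{eq:onlcond} says: for every partial orientation of the edges, there is an assignment rate in which, on each oriented edge, the ``tail'' endpoint sends strictly less rate than the ``head''. Isolated vertices can be treated first. If $u$ has no incident edge, then no assignment rate exists, since $\sum_{e\ni u}\nu_{u,e}=0\neq\mu_u$. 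On the other side, $\{u\}$ is independent with $\mu(\{u\})>0=\mu(N(\{u\}))$. So both conditions fail, and we may assume there are no isolated vertices.

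For \eqref{eq:onlcond} $\Rightarrow$ \texttt{NCOND}, fix a nonempty independent set $I$. Take $U=\{(u,e)\mid u\in I,\ e\ni u\}$. This is a support, because independence of $I$ means no edge has both endpoints in $I$. Let $\nu=\nu^U$ and write $\bar u$ for the other endpoint of $e$. Since $\bar u\in N(I)$, the strict inequalities give
\[
\mu(I)=\sum_{u\in I}\sum_{e\ni u}\nu_{u,e}<\sum_{u\in I}\sum_{e\ni u}\nu_{\bar u,e}\le\sum_{w\in N(I)}\sum_{e\ni w}\nu_{w,e}=\mu(N(I)).
\]
The inequality is strict because each $u\in I$ has at least one incident edge. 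The middle step holds because each pair $(w,e)$ with $w\in N(I)$ appears at most once: $e$ has only one endpoint in $I$.

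For \texttt{NCOND} $\Rightarrow$ \eqref{eq:onlcond}, fix a support $U$ and look for $\nu$ of the following ``lambda-like'' form. Set $\nu_{u,e}=x_e$ and $\nu_{\bar u,e}=x_e+\delta$ when $(u,e)\in U$, and $\nu_{u,e}=\nu_{\bar u,e}=x_e$ when $e$ is unoriented. The strict inequalities then hold automatically, and the constraints become the perturbed conservation equation
\[
A x=\mu-\delta d^U,\qquad x\ge 0,
\]
where $d^U_w$ counts the oriented edges whose head is $w$. The task is therefore to show that \texttt{NCOND} forces $Ax=\mu'$ to have a nonnegative solution for every $\mu'$ close to $\mu$. \texttt{NCOND} has strict inequalities over finitely many sets, so it is an open condition, and it suffices to prove the following: \texttt{NCOND}, with non-strict inequality, for $\mu'$ implies that $Ax=\mu'$ is feasible with $x\ge 0$.

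I would prove this by Farkas' lemma. If the system is infeasible, there is $y\in\mathbb{R}^V$ with $y_a+y_b\ge 0$ for every edge $\{a,b\}$ and $\langle y,\mu'\rangle<0$. This is the main obstacle: the certificate $y$ must be converted into an independent set violating \texttt{NCOND}. I would first normalise $y$ to an extreme ray of the cone $\{y\mid y_a+y_b\ge0\ \forall\{a,b\}\in E\}$. The extreme points of such edge-constraint systems are half-integral, so I expect these rays to take values in $\{-1,0,1\}$ after scaling. Then let $I=\{y=-1\}$. This set is independent, since two adjacent vertices in $I$ would give $y_a+y_b=-2<0$. Its neighbourhood satisfies $N(I)\subseteq\{y=1\}$. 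Hence $0>\langle y,\mu'\rangle\ge\mu'(\{y=1\})-\mu'(I)\ge\mu'(N(I))-\mu'(I)$, contradicting \texttt{NCOND}. If the half-integrality claim is hard to invoke cleanly, my fallback is a direct threshold argument. For $t>0$, let $I_t=\{y\le -t\}$; adjacent vertices of $I_t$ would violate $y_a+y_b\ge0$, so $I_t$ is independent and $N(I_t)\subseteq\{y\ge t\}$. Integrating $\mu'(N(I_t))-\mu'(I_t)$ over $t$ bounds $\langle y,\mu'\rangle$ from below by a nonnegative quantity under \texttt{NCOND}, giving the same contradiction.
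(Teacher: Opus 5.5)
Your proof is correct. The direction \eqref{eq:onlcond} $\Rightarrow$ \texttt{NCOND} is the paper's argument: take the support generated by an independent set, use $\nu_{u,e}<\nu_{\bar u,e}$, and sum. Your injectivity remark and your handling of isolated vertices make it slightly cleaner; the paper's second ``$<$'' should be a ``$\le$''.

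For \texttt{NCOND} $\Rightarrow$ \eqref{eq:onlcond} you take a genuinely different route. The paper sorts the vertices into three categories relative to the support and treats Category~1 as an independent set. It routes that set's arrival rate into its neighbourhood by a max-flow/min-cut argument with unsaturated sink edges, and then patches the remaining oriented edges by hand using a solution $\lambda$ of $A\lambda=\mu$. You instead use one uniform ansatz, $\nu_{u,e}=x_e$ and $\nu_{\bar u,e}=x_e+\delta$. This makes every strict inequality automatic and reduces the problem to feasibility of $Ax=\mu-\delta d^U$ with $x\ge 0$. Openness of \texttt{NCOND}, Farkas, and the layer-cake argument over the sets $\{y\le -t\}$ then finish it.

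What this buys is uniformity: no case analysis is needed. You also sidestep the delicate points of the paper's classification. Two Category~1 vertices can be adjacent through an unoriented edge, so they need not form an independent set. In a cyclically oriented triangle every vertex is in Category~2, so an edge with a Category~2 tail need not have a Category~3 head. In effect you show that \texttt{NCOND} places $\mu$ in the interior of the cone spanned by the columns of $A$, which is \eqref{eq:inccond}. You then run the same perturbation idea as the paper's proof that \eqref{eq:inccond} implies \eqref{eq:onlcond}. The paper's flow construction is more combinatorial and closer to the heuristic behind \eqref{eq:onlcond}, but it is less robust.

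Two small points for the write-up:
\begin{itemize}
\item Take $\delta$ small enough that $\mu-\delta d^U>0$ as well. The step $\mu'(N(I_t))\le\mu'(\{y\ge t\})$ uses $\mu'\ge 0$.
\item Keep the threshold argument rather than the extreme-ray one. The cone $\{y : y_a+y_b\ge 0\}$ is not pointed when $G$ has a bipartite component (take $y=\pm 1$ on the two sides), so ``extreme rays'' would need extra care.
\end{itemize}
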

\begin{proof}
	Every independent set $U \subsetneq V$ of $G$ corresponds to a support $W$: for each $u$ in $U$ and each edge $e$ containing $u$, we add $(u, e)$ to $W$. By Condition \eqref{eq:onlcond}, there exists an assignment rate $\nu$ such that for all $(u, e) \in W$, we have
	\[
		\nu_{u, e} < \frac{\nu_e}{|e|} = \frac{\nu_e}{2}.
	\]
	If $e = (u, v)$ is the edge between $u$ and $v$, then since $\nu_e = \nu_{u, e} + \nu_{v, e}$, this can be rewritten as $\nu_{u, e} < \nu_{v, e}$. Summing over all such edges, we get
	\begin{align*}
		\mu(U)
		& = \sum_{u \in U} \mu_u = \sum_{u \in U} \sum_{e = (u, v)} \nu_{u, e} < \sum_{u \in U} \sum_{e = (u, v)} \nu_{v, e} = \sum_{v \in N(U)} \sum_{e = (u, v), u \in U} \nu_{v, e} \\
		& < \sum_{v \in N(U)} \sum_{e = (u, v)} \nu_{v, e} = \sum_{v \in N(U)} \mu_v = \mu(N(U)),
	\end{align*}
	meaning that Condition \eqref{eq:onlcond} implies \texttt{NCOND}.
	
	The converse is more difficult, mainly because not all support sets are independent sets: it can happen that for a support $W$, we have $(u, e)$ in $W$, but also $(v, f)$ in $W$ for $f = (u, v)$. This corresponds to having class-$u$ items assigned to $e$ and class-$v$ items assigned to $f$, but no matching of type-$f$ is possible, since Condition \eqref{eq:onlcond} does not allow reassignment.
	
	That being said, given a support $W$, we classify the vertices into three categories:
	\begin{itemize}
		\item Category 1: $u$ such that $(u, e) \in W$ for some edge $e$, and there is no neighbour $v$ of $u$ such that $(v, f) \in W$ for $f = (u, v)$;
		\item Category 2: $u$ such that $(u, e) \in W$ for some edge $e$, and there are some neighbours $v$ of $u$ such that $(v, f) \in W$ for $f = (u, v)$;
		\item Category 3: $u$ such that $(u, e) \not\in W$ for all edges $e$ containing $u$.
	\end{itemize}
	
	We now need to construct an assignment rate $\nu$ guaranteeing the inequality condition in \eqref{eq:onlcond}.
	
	First, by definition, the vertices in Category 1 form an independent set, which we will denote by $U$. \texttt{NCOND} implies that $\mu(N(U)) > \mu(U)$.
	
	Let us consider the bipartite graph $H = (A \cup B, F)$ induced by $U$ and $N(U)$: namely, let $A = U$, $B = N(U)$, and $F$ be the set of edges between $U$ and $N(U)$ with infinite capacity. To this, we add a source $s$ and a sink $t$. To each vertex $u$ in $A$, we add an edge from $s$ of capacity $\mu_u$; from each vertex $v$ in $B$, we add an edge to $t$ of capacity $\mu_v$.
	
	Now let us see what a min cut will be in this flow network: naturally we cannot cut any edge between $A$ and $B$. Suppose we cut edges between $s$ and $A' \subseteq A$, it remains to cut edges between $t$ and $B' = N(A \setminus A')$. The cost of this cut is
	\begin{align*}
		\mu(A') + \mu(B')
		& = \mu(A') + \mu(N(A \setminus A')) \\
		& \geq \mu(A') + \mu(A \setminus A') = \mu(A') + \mu(A) - \mu(A') = \mu(A)
	\end{align*}
	with the inequality due to applying \texttt{NCOND} to $A \setminus A'$ which is either empty or independent. In particular, the inequality is strict whenever $A \setminus A'$ is non-empty, so the equality occurs if and only if $A' = A$.
	
	By max-flow - min-cut theorem, this is the maximum flow, and moreover, there exists a maximum flow such that all other edges are not saturated. In particular, for all vertices $v \in B$, the incoming flow toward $v$ does not saturate $v$. 
	
	Take a maximum flow, for all edges $e = (u, v)$ between $u \in A$ and $v \in B$, let $\nu_{u, e}$ be the flow of edge $e$. Given a vertex $v \in B$, since $v$ is not saturated, we have $\mu_v - \sum_{e = (u, v)} \nu_{u, e} > 0$. Distribute this slack equally, so that we have $\nu_{v, e} > \nu_{u, e}$ for all $e = (u, v)$. This guarantees the inequality condition in \eqref{eq:onlcond} for all edges involving vertices in Category $1$.
	
	Finally, the remaining edges $e = (u, v)$ concerned by the inequality condition must involve either vertices $u$ and $v$ in Category 2 and/or 3. Moreover, $e \in W$, so exactly one of the two vertices $u$ and $v$ are in Category 2. Without loss of generality, assume that it is $u$, so $v$ must be in Category 3.
	
	Then, we set $\nu_{u, e} = 0$. By doing this, the arrival rate $\mu_u > 0$ will be distributed amongst the neighbours $v$ such that $(v, f) \in W$ for $f = (u, v)$; by definition, these neighbours exist, so this is possible. It remains to guarantee that $\nu_{v, e} > 0$. This can be done, for example by setting $\nu_{v, e} = \lambda_e$ where $\lambda \in \mathbb{R}_{> 0}^E$ is a solution of $A \lambda = \mu$, which is guaranteed to exist by the conservation equation.	
\end{proof}

This proof also demonstrates how even though Condition \eqref{eq:onlcond} in theory can be derived directly from \texttt{NCOND}, the derivation is not trivial and somewhat not natural. This partially explains why various attempts by Rahme and Moyal \cite{Rahme2021} to generalise \texttt{NCOND} in looking for a stability criterion have not been successful.

	\section{Conclusion}
\label{Conclusion}

We have introduced online assignment policies and and showed that they are a right generalisation to greedy polices in the hypergraph setting. In particular, this policy family is maximally stable, and the analysis leads to three equivalent necessary and sufficient criteria for stability, namely Condition \eqref{eq:onlcond}, \eqref{eq:gencond}, and \eqref{eq:inccond}, which generalise known criteria for graph setting. This thus completely answers the question of such criterion posed by Rahme and Moyal \cite{Rahme2021}.

Moreover, the proof of sufficiency is constructive, based on carefully designed Lyapunov function $L_+$, which leads to a $L_+$-MaxWeight policy that is maximally stable, arrival-rate agnostic, and Markovian. On the other hand, for necessity, we develop a new framework which extends the law-of-large-numbers argument used to prove necessity in case of stochastic matching on graphs.

What is more valuable than the stability criteria introduced in this paper is the framework of online assignment polices, as well as the ideas in the proofs, which can be generalised to other cases of stochastic matching.

In particular, one may allow batch arrivals, meaning multiple items arrive at the same time, and weighted matching - that is, a matching might require more than one item of a given time. In fact, one can allow the weights to be negative, which corresponds to routing as commonly found in assemble-to-order systems. Another direction would be to introduce service time - that is, the time a matching takes before leaving the buffer - and abandonment. The former is also commonly found in production lines, and the latter is a natural phenomenon in real-life applications. These generalisations will be addressed in future works.
	
	\bibliographystyle{ieeetr}
	\bibliography{refs}

@article{Mairesse2016,
author = {Mairesse, Jean and Moyal, Pascal},
doi = {10.1017/jpr.2016.65},
issn = {0021-9002},
journal = {Journal of Applied Probability},
month = {dec},
number = {4},
pages = {1064--1077},
title = {{Stability of the stochastic matching model}},
url = {https://www.cambridge.org/core/product/identifier/S0021900216000656/type/journal_article},
volume = {53},
year = {2016}
}

@article{Gupta2024,
author = {Gupta, Varun},
journal = {Operations Research},
number = {3},
pages = {1139},
title = {{Algorithm for Multiway Matching with Bounded Regret}},
volume = {72},
year = {2024}
}

@article{Kerimov2021,
author = {Kerimov, S{\"{u}}leyman and Ashlagi, Itai and Gurvich, Itai},
doi = {10.2139/ssrn.3824407},
journal = {SSRN Electronic Journal},
title = {{Dynamic Matching: Characterizing and Achieving Constant Regret}},
year = {2021}
}

@article{Roth2007,
author = {Roth, Alvin E. and S{\"{o}}nmez, Tayfun and {Utku {\"{U}}nver}, M.},
doi = {10.1257/aer.97.3.828},
issn = {00028282},
journal = {American Economic Review},
number = {3},
pages = {828--851},
pmid = {29135211},
title = {{Efficient kidney exchange: Coincidence of wants in markets with compatibility-based preferences}},
volume = {97},
year = {2007}
}

@article{Comte2021,
archivePrefix = {arXiv},
arxivId = {2112.14457},
author = {Comte, C{\'{e}}line and Mathieu, Fabien and Varma, Sushil Mahavir and Bu{\v{s}}i{\'{c}}, Ana},
eprint = {2112.14457},
title = {{Online Stochastic Matching: A Polytope Perspective}},
url = {http://arxiv.org/abs/2112.14457},
year = {2021}
}

@article{Teubner2015,
author = {Teubner, Timm and Flath, Christoph M.},
doi = {10.1007/s12599-015-0396-y},
issn = {2363-7005},
journal = {Business \& Information Systems Engineering},
number = {5},
pages = {311--324},
title = {{The Economics of Multi-Hop Ride Sharing}},
volume = {57},
year = {2015}
}

@article{Nazari2019,
archivePrefix = {arXiv},
arxivId = {1608.01646},
author = {Nazari, Mohammadreza and Stolyar, Alexander L.},
doi = {10.1007/s11134-018-9593-y},
eprint = {1608.01646},
issn = {15729443},
journal = {Queueing Systems},
number = {1-2},
pages = {143--170},
title = {{Reward maximization in general dynamic matching systems}},
volume = {91},
year = {2019}
}

@article{Gurvich2014,
author = {Gurvich, Itai and Ward, Amy},
doi = {10.1287/13-ssy097},
issn = {1946-5238},
journal = {Stochastic Systems},
number = {2},
pages = {479--523},
title = {{On the Dynamic Control of Matching Queues}},
volume = {4},
year = {2014}
}

@article{Veron2014,
author = {V{\'{e}}ron, Maxime and Marin, Olivier and Monnet, S{\'{e}}bastien},
doi = {10.1145/2578260.2578265},
isbn = {9781450327060},
journal = {Proceedings of the 24th ACM Workshop on Network and Operating Systems Support for Digital Audio and Video, NOSSDAV 2014},
pages = {7--12},
title = {{Matchmaking in multi-player on-line games: Studying user traces to improve the user experience}},
year = {2014}
}

@article{Rahme2021,
archivePrefix = {arXiv},
arxivId = {1907.12711},
author = {Rahme, Youssef and Moyal, Pascal},
doi = {10.1017/apr.2021.8},
eprint = {1907.12711},
issn = {00018678},
journal = {Advances in Applied Probability},
number = {4},
pages = {951--980},
title = {{A stochastic matching model on hypergraphs}},
volume = {53},
year = {2021}
}

@article{Busic2013,
archivePrefix = {arXiv},
arxivId = {1003.3477},
author = {Bu{\v{s}}i{\'{c}}, Ana and Gupta, Varun and Mairesse, Jean},
doi = {10.1239/aap/1370870122},
eprint = {1003.3477},
issn = {00018678},
journal = {Advances in Applied Probability},
number = {2},
pages = {351--378},
title = {{Stability of the bipartite matching model}},
volume = {45},
year = {2013}
}

@article{Harrison1973,
author = {Harrison, J. Michael},
doi = {10.1017/s0021900200095358},
issn = {0021-9002},
journal = {Journal of Applied Probability},
number = {02},
pages = {354--367},
title = {{Assembly-like queues}},
volume = {10},
year = {1973}
}

@phdthesis{Xie2022,
author = {Xie, Bowen},
school = {Iowa State University},
title = {{Topics of Queueing Theory in Heavy Traffic}},
url = {https://www.proquest.com/openview/00d4051b819cafeb6ab9d636487b39d5},
year = {2022}
}
\end{document}